\NewDocumentCommand{\statcirc}{ O{#2} m }{%
    \begin{tikzpicture}
    \fill[#2] (0,0) circle (1.0ex); 
    \fill[#1] (0,0) -- (180:1ex) arc (180:0:1ex) -- cycle; 
    \end{tikzpicture}
}
\algnewcommand{\LineComment}[1]{\State \(\triangleright\) #1}
 \gdef\tfn@fnt{0}%
\pgfplotsset{compat=1.18}
\newcommand{\sk}{{\sf sk}}
\newcommand{\pk}{{\sf pk}}
\newcommand{\hash}{{\sf Hash}}
\newcommand{\cS} {\mathcal{S}}
\mathchardef\mhyphen="2D
\newcommand{\jolteon}{Jolteon*\xspace}
\newcommand{\name}{Zaptos\xspace}
\newcommand{\baseline}{Aptos Blockchain\xspace}
\newcommand{\latency}{blockchain latency\xspace}
\newcommand{\cons}{{\sf Consensus}\xspace}
\newcommand{\exe}{{\sf Execution}\xspace}
\newcommand{\stg}{{\sf Storage}\xspace}
\newcommand{\stgr}{\ensuremath{{\sf read}}\xspace}
\newcommand{\stgw}{\ensuremath{{\sf write}}\xspace}
\newcommand{\txn}{\ensuremath{{\sf txn}}\xspace}
\newcommand{\txnp}{\ensuremath{{\pi}}\xspace}
\newcommand{\blk}{\ensuremath{{\sf B}}\xspace}
\newcommand{\id}{\ensuremath{{\sf id}}\xspace}
\newcommand{\parent}{\ensuremath{{\sf parent}}\xspace}
\newcommand{\height}{\ensuremath{{\sf height}}\xspace}
\newcommand{\bstate}{\ensuremath{{\sf state}}\xspace}
\newcommand{\payload}{\ensuremath{{\sf payload}}\xspace}
\newcommand{\proposer}{\ensuremath{{\sf proposer}}\xspace}
\newcommand{\pos}{\ensuremath{{\sf pos}}\xspace}
\newcommand{\pipeline}{\ensuremath{\mathcal{P}}\xspace}
\newcommand{\odp}{\ensuremath{\sigma_{\sf od}}\xspace}
\newcommand{\stp}{\ensuremath{\sigma_{\sf st}}\xspace}
\newcommand{\stps}{\ensuremath{\Sigma_{\sf st}}\xspace}
\newcommand{\bsig}{\ensuremath{\sigma_{\sf \blk}}\xspace}
\newcommand{\cmtstate}{\ensuremath{{\sf S_{cmt}}}\xspace}
\newcommand{\cmtheight}{\ensuremath{{\sf H_{cmt}}}\xspace}
\newcommand{\cl}[1]{\ensuremath{{\sf C}_{#1}}\xspace}
\newcommand{\fn}[1]{\ensuremath{{\sf F}_{#1}}\xspace}
\newcommand{\vn}[1]{\ensuremath{{\sf V}_{#1}}\xspace}
\newcommand{\ver}{{\sf Verify}\xspace}
\newcommand{\sign}{{\sf Sign}\xspace}
\newcommand{\agg}{{\sf Aggregate}\xspace}
\newcommand{\veragg}{{\sf VerifyAgg}\xspace}
\newcommand{\preexe}[1]{\colorbox{red!30}{#1}}
\newcommand{\preexebox}[1]{\hspace{-16pt}{\colorbox{red!30}{\hspace{-3pt}\vbox{\hsize .98\linewidth #1}}}}
\newcommand{\precert}[1]{\colorbox{yellow!30}{#1}}
\newcommand{\precertbox}[1]{\hspace{-16pt}{\colorbox{yellow!30}{\hspace{-3pt}\vbox{\hsize .98\linewidth #1}}}}
\newcommand{\precmt}[1]{\colorbox{green!30}{#1}}
\newcommand{\precmtbox}[1]{\hspace{-16pt}{\colorbox{green!30}{\hspace{-3pt}\vbox{\hsize .98\linewidth #1}}}}
\newcommand{\tags}[1]{\textsc{#1}\xspace}
\renewcommand\footnotetextcopyrightpermission[1]{} 
\begin{document}

\title{\name: Towards Optimal Blockchain Latency}


\author{Zhuolun Xiang}
\email{daniel@aptoslabs.com}
\affiliation{%
  \institution{Aptos Labs}
  \country{Palo Alto, USA}
}

\author{Zekun Li}
\email{zekun@aptoslabs.com}
\affiliation{%
  \institution{Aptos Labs}
  \country{Palo Alto, USA}
}

\author{Balaji Arun}
\email{balaji@aptoslabs.com}
\affiliation{%
  \institution{Aptos Labs}
  \country{Palo Alto, USA}
}

\author{Teng Zhang}
\email{teng@aptoslabs.com}
\affiliation{%
  \institution{Aptos Labs}
  \country{Seattle, USA}
}

\author{Alexander Spiegelman}
\email{sasha@aptoslabs.com}
\affiliation{%
  \institution{Aptos Labs}
  \country{Palo Alto, USA}
}








\renewcommand{\shortauthors}{}


\begin{abstract}
    
End-to-end blockchain latency has become a critical topic of interest in both academia and industry. However, while modern blockchain systems process transactions through multiple stages, most research has primarily focused on optimizing the latency of the Byzantine Fault Tolerance consensus component.

In this work, we identify key sources of latency in blockchain systems and introduce Zaptos, a parallel pipelined architecture designed to minimize end-to-end latency while maintaining the high-throughput of pipelined blockchains.

We implemented Zaptos and evaluated it against the pipelined architecture of the Aptos blockchain in a geo-distributed environment. Our evaluation demonstrates a 25\% latency reduction under low load and over 40\% reduction under high load. Notably, Zaptos achieves a throughput of 20,000 transactions per second with sub-second latency, surpassing previously reported blockchain throughput, with sub-second latency, by an order of magnitude.    

\end{abstract}


\keywords{}


\maketitle

\section{Introduction}

Modern blockchains are locked in a competitive race to push the boundaries of performance, focusing on maximizing throughput and minimizing latency to meet the demands of scalability and efficiency in decentralized systems.

The majority of research and innovation in both academia and the Web3 industry centers on enhancing the performance of Byzantine Fault Tolerant (BFT) consensus mechanisms~\cite{arun2024shoal++, babel2023mysticeti, doidge2024moonshot}.

However, BFT consensus is neither the throughput nor the latency bottleneck in modern blockchains.
The end-to-end latency, which users care about, is measured from the moment a transaction is submitted to the point of receiving confirmation that it has been committed (applied to the blockchain state). Besides the consensus latency for agreeing on a block of transactions, this process encompasses several other stages: communication time between clients and validators, block execution time, certification of the final execution state, persisting the results to storage, and communicating the outcome back to the client.

In fact, modern consensus systems typically require around 300-400 milliseconds to order a transaction under low load~\cite{aptos, sui}. However, the end-to-end latency of the fastest blockchains is around 1 second in this case~\cite{latency-benchmark, latency-dashboard} and increases substantially as the load rises.

This paper focuses on reducing end-to-end blockchain latency under both low and high load conditions. We propose \name: a parallel
pipelined architecture that maximizes resource utilization for high throughput while parallelizing pipeline stages to achieve optimal blockchain latency.

Theoretically, our approach reduces end-to-end latency by 1 and 5 message hops under low and high loads, respectively. In practice, we demonstrated sub-second end-to-end blockchain latency with a throughput of 20,000 transactions per second (TPS) on a geo-distributed network of 100 validators. This represents a 40\% improvement (over 0.5 second) compared to the \baseline, which serves as the baseline for our work.

It is worth noting that, while much of the community's focus has been on improving consensus latency under high-load conditions, our work achieves a total end-to-end latency reduction that is comparable to the entire consensus latency.

\subsection{Technical overview}

Our baseline is the Aptos blockchain~\cite{aptos}, which builds upon the high-throughput, pipelined architecture introduced by Diem~\cite{diem}. 
For DDoS protection, clients on the Aptos blockchain submit transactions to fullnodes, which then forward them to validators. Validators disseminate the transactions among themselves and reach consensus on a block of transactions. The block is subsequently executed by the execution engine, after which the validators communicate again to certify the final execution result~\footnote{Theoretically, this step is not strictly necessary. However, in practice, the certification stage is crucial for producing state proofs and preventing safety violations in the event of execution divergence due to software bugs.}.
Next, the new state is persisted in the validators' storage, and the block is propagated back to the fullnodes. The fullnodes re-execute the block~\footnote{Fullnodes can alternatively synchronize the transaction outputs or the final state, but in most cases, re-executing the block proves to be faster.} and verify that the final state is certified by the validators. Finally, the fullnodes send transaction confirmations back to the clients.

The pipelined Blockchain design aims to maximize throughput by utilizing all available resources. 
Note that transactions progress through multiple stages in the system, each mainly requiring distinct resources. For instance, the consensus stage is network-intensive, execution relies primarily on the CPU, and persisting data to storage demands significant IO capacity. 
Therefore, a block of transactions does not need to wait for its predecessor to complete all stages before starting its processing. Once a block advances to the next stage, the subsequent block can begin the current stage. As a result, at any given time, block$_i$ is being ordered by consensus, block$_{i-1}$ is being executed, and block$_{i-2}$ is being persisted to storage.

This architecture achieves high system throughput but does not reduce end-to-end latency, as transactions must sequentially pass through all stages of processing.

\name introduces a parallel pipelined architecture designed to reduce end-to-end latency while preserving the high throughput by optimistically \emph{shadowing} most of the pipelining stages.
In a nutshell, the three main optimizations are the following:
\begin{enumerate}
    \item Optimistically execute the block immediately after receiving it in the consensus stage.
    \item Optimistically persist the final state to storage once execution is done without waiting for the state certification phase to complete.
    \item Piggyback the state certification stage onto the consensus protocol. 
\end{enumerate}

Note that the first two optimizations require handling the unhappy path, where consensus fails to order the block due to network issues or a faulty leader. The third optimization is more subtle, as it must prevent a scenario where a Byzantine validator obtains an execution state certificate before the block is ordered by consensus.
This is because, if the block is eventually forked, a safety violation could occur.

Intuitively, to guarantee safety, validators attach their optimistic block execution certificate signatures to their final consensus messages. We prove that if an adversary manages to gather enough signatures to form an execution certificate, the block can no longer be forked and will eventually be ordered.

With these optimizations, \name effectively shadows the execution, state certification, and storage stages under the consensus latency in the common case. This means that by the time a block is ordered, it has already been executed, the final state has been certified, and the data persisted to storage.

Additionally, in \name, validators immediately forward blocks to fullnodes upon first receiving them. This enables fullnodes to apply the first two optimizations, further reducing end-to-end latency.

Hence, the end-to-end latency of \name in this case is equal to 
\vspace{-0.2cm}
\[ 2\delta_{\sf cf} + 2\delta_{\sf fv} + T_{\sf con}, \]  
where \(\delta_{\sf cf}\) represents the communication latency between clients and fullnodes, \(\delta_{\sf fv}\) is the communication latency between fullnodes and validators, and \(T_{\sf con}\) is the consensus latency.  

Notably, clients typically connect to the nearest fullnode, and fullnodes generally synchronize with the closest validator. As a result, \(2\delta_{\sf cf} + 2\delta_{\sf fv}\) is relatively small compared to \(T_{\sf con}\), leaving little room for further latency optimizations.

All in all, the \name parallel pipelined architecture is optimized for both latency and throughput.  
The throughput benefits from efficient pipelined resource utilization, while achieving optimal latency:  
if the blockchain employs an optimal-latency consensus protocol, then the overall blockchain latency is also optimal!

We extended the Aptos open-source with a production-ready implementation of \name and compared both systems in a "mainnet-like" environment with 100 geographically distributed validators.  
Our evaluation demonstrates a 170ms latency reduction under low load and over 0.5s (40\%) latency reduction under high load.  
To the best of our knowledge, \name is the first end-to-end blockchain system to achieve sub-second latency at 20k TPS, surpassing all publicly available Blockchains by more than an order of magnitude.

\section{Preliminary}
\label{sec:prelim}

\subsection{Model Assumptions}
\label{sec:model}
The blockchain system consists of {\em clients}, and servers including {\em validator nodes} (referred to as {\em validators} for brevity) and {\em fullnodes}.
Both validators and fullnodes are geo-distributed. 
Clients, typically users or applications, submit transactions to the system. 
Validators are servers that operate the core blockchain system, ensuring its security as long as a certain threshold of validators behave correctly. 
Fullnodes, which also operate the blockchain system, do not impact the overall correctness of the system but serve to support the network's performance and accessibility.
Clients are restricted to interacting solely with fullnodes for Distributed Denial of Service (DDoS) protection and system scalability.
Fullnodes communicate with both clients and validators, while validators communicate with fullnodes as well as with each other. 
All communication channels are assumed to be reliable and authenticated. 
The results of this paper can be easily extended to a model where clients directly communicate with the validators, as discussed in~\Cref{sec:zaptos:discussion:extension}. 
The system assumes partial synchrony~\cite{dwork1988consensus}, meaning that after an unknown Global Stabilization Time (GST), message delays are bounded by a known upper bound.
We define a {\em round} as a single network delay incurred when a message travels from a sender to a recipient.

For ease of presentation, we assume the standard Byzantine fault tolerance (BFT) model~\cite{lamport1982byzantine}, allowing up to $f$ of the total $n=3f+1$ validators to be malicious. A validator is \emph{honest} if it is not malicious. 
We define a {\em quorum} as any group of $2f+1$ validators. 
The results presented in this paper can be trivially extended to the proof-of-stake~\cite{ethereumpos} setting, where validators hold non-zero stakes and fullnodes hold zero stake.
Any client or fullnode may be malicious, and details of DDoS attack prevention~\cite{zargar2013survey} are omitted here for brevity. 

For simplicity, we assume in this paper that each client connects to its nearest fullnode and each fullnode connects to the closest validator. Additionally, we assume that clients or fullnodes will switch to the next closest peer if the current peer becomes unresponsive or exhibits malicious behavior.
The specific mechanisms for discovering the nearest peers or switching in response to malicious activity are beyond the scope of this paper.

\subsection{Definitions}
\label{sec:definitions}
\begin{definition}[BFT SMR]\label{def:bftsmr}
    A Byzantine fault tolerant state machine replication (BFT SMR) protocol commits client transactions as a log akin to a single non-faulty server, and provides the following guarantees:
    \begin{itemize}
        \item Safety. Honest servers do not commit different transactions at the same log position.
        \item Liveness. Each transaction from honest client is eventually committed by all honest servers.
    \end{itemize}
\end{definition}
The server in the above definition can be any validator or fullnode.
In addition to these requirements, a {\em validated} BFT SMR protocol must ensure {\em external validity}~\cite{cachin2001secure}, where all committed transactions meet an application-specific predicate. This is achieved by validity checks inside consensus, but we omit these details for brevity, focusing instead on the BFT SMR formulation above.
We also assume honest clients will resubmit transactions that are previously failed to commit.

\begin{table}[t]
    \centering
    \small
    \setlength\tabcolsep{8pt}
    \renewcommand{\arraystretch}{1.3}
    \begin{tabular}{cl}
        \toprule
        \textbf{Symbol} & \textbf{Description} \\
        \midrule
        \vn{i}, \fn{i}, \cl{i} & validator node $i$, fullnode $i$, client $i$, respectively \\
        \txn & transaction \\
        \blk & block as $\blk=(\id,\proposer,\height,\parent,\payload,\bsig,$ \\ 
        & $\odp,\bstate,\stps,\stp)$ \\
        \proposer & proposer of the block \\
        \height & block's height in the blockchain \\
        \parent & id of block's parent block \\
        \payload & list of transactions in the block \\
        \id & block's id as $\id=\hash(\proposer || \height || \parent || \payload)$ \\
        \bsig & signature $\bsig=\sign(\blk.\id, \sk_i)$ where $\blk.\proposer=\vn{i}$ \\
        \odp & block's ordering proof as an aggregated signature on $\blk.\id$ \\
        \bstate & blockchain state after block's execution \\
        \stps & set of signatures on $\blk.\bstate$ \\
        \stp & block's state proof as an aggregated signature on $\blk.\bstate$ \\
        \pipeline & local in-memory buffer for blocks that are in the pipeline \\
        \cmtheight & height of latest committed block \\
        \cmtstate & state of latest committed block \\
        \txnp & transaction's inclusion proof, e.g., Merkle tree proof \\
        \bottomrule
    \end{tabular}
    \caption{Notations.}
    \label{tab:symbols}
\end{table}

We summarize the notations used in the paper in~\Cref{tab:symbols}.

\subsection{Primitives}
\label{sec:prim}


\subsubsection*{Multi-signature}\label{sec:prim:mutisig}
We use standard multi-signature scheme such as BLS~\cite{boneh2001short}. As a setup, each validator $\vn{i}$ is assigned with a pair of secret key and public key $(\sk_i, \pk_i)$, and knows the public keys of all other validators and the aggregated public key $\pk$. We use the following standard protocol interfaces for the multi-signature. For brevity, we omit the security definitions, which can be found in~\cite{boneh2001short}.
\begin{itemize}
    \item $\sign(m, \sk_i)\rightarrow \sigma_i$. The signing protocol takes as input a message $m$, and the secret key $\sk_i$ and outputs a signature $\sigma_i$. 
    \item $\ver(\sigma_i, m,\pk_i)\rightarrow 0 / 1$. The verification protocol takes as input a validator $\vn{i}$'s signature $\sigma_i$, a message $m$, and $\vn{i}$'s public key $\pk_i$. It outputs 1 (accept) or 0 (reject).
    \item $\agg(m,\{(\pk_i, \sigma_i)\})\rightarrow \sigma / \bot$. The combine protocol takes as input a message $m$, and a set of tuples $(\pk_i,\sigma_i)$ of public keys and signatures of parties. It outputs an aggregated $\sigma$ or $\bot$.
    \item $\veragg(\sigma, m,\pk)\rightarrow 0 / 1$. The verification protocol takes as input an aggregated signature $\sigma$, a message $m$, and an aggregated public signature $\pk$. It outputs 1 (accept) or 0 (reject).
\end{itemize}

\subsubsection*{Consensus}\label{sec:prim:consensus}
We build the system on top of a consensus protocol that has the following interfaces:
\begin{itemize}
    \item $\cons.input(\txn)$. The consensus protocol takes as input a transaction \txn submitted by a client. 
    \item $\cons.output()\rightarrow \blk$. The consensus protocol outputs a block \blk, where \blk is defined in~\Cref{tab:symbols}~\footnote{When \cons outputs \blk, $\blk.\odp=\blk.\bstate=\blk.\stp=\bot, \blk.\stps=\{\}$.}.
\end{itemize}

\cons guarantees the following properties:
\begin{itemize}
    \item {\em Safety}. If an honest validator outputs $\cons.output()\rightarrow \blk$, then 
    no honest validator outputs $\cons.output()\rightarrow \blk'$ such that $\blk'.\height=\blk.\height$ and $\blk'\neq \blk$.
    \item {\em Liveness}. 
    \begin{itemize}
        \item If an honest validator inputs $\cons.input(\txn)$, then after GST, all the honest validators eventually outputs $\cons.output()\rightarrow \blk$ where $\txn\in\blk$. 
        \item For every block height $h$, after GST, all the honest validators eventually output $\cons.output()\rightarrow \blk$ where $\blk.\height=h$. 
    \end{itemize}
\end{itemize}

In this paper, we consider consensus protocols that have the following properties.
\begin{itemize}
    \item {\em Block Proposal.}
    The protocol has validators proposing block proposals, explicitly or implicitly~\footnote{For leader-based BFT protocols such as~\cite{castro1999practical, gelashvili2022jolteon}, the leader of a given round can explicitly propose a block proposal extending the blockchain. 
    For DAG-based BFT protocols such as~\cite{spiegelman2022bullshark,spiegelman2023shoal,arun2024shoal++,babel2023mysticeti}, there are also chosen leaders that can implicitly propose a block proposal which consists its proposed DAG node and all causally dependent DAG nodes that are not included in the previous implicit block proposal.}.
    To propose a block \blk, a validator $\vn{i}$ broadcasts a message $(\tags{Proposal}, \blk)$.
    \item {\em Order Proof.}
    When outputting a block \blk, the consensus protocol also outputs 
    an order proof $\sigma_\blk$ which is an aggregated signature on the metadata of \blk (e.g., $\blk.\id$) signed by the validators. 
    \item {\em Order Vote.}
    To produce the order proof, as the final step of consensus protocol, each validator $\vn{i}$ broadcasts a message $(\tags{OrderVote}, \blk.\id, \sigma_i=\sign(\blk.\id, \sk_i))$ to vote for ordering a block \blk. A validator orders \blk upon receiving a quorum ($2f+1$) of $\tags{OrderVote}$ messages that aggregate the order proof $\sigma$ for \blk.
\end{itemize}

Numerous consensus protocols satisfy or can be easily adapted to satisfy the order vote property, such as PBFT~\cite{castro1999practical}, Tendermint~\cite{buchman2016tendermint}, SBFT~\cite{gueta2019sbft}, HotStuff~\cite{yin2019hotstuff}, Streamlet~\cite{chan2020streamlet}, Jolteon~\cite{gelashvili2022jolteon}, Moonshot~\cite{doidge2024moonshot} and many others.
Additionally, another series of DAG-based consensus protocols can also satisfy this property, such as Bullshark~\cite{spiegelman2022bullshark}, Shoal~\cite{spiegelman2023shoal}, Shoal++~\cite{arun2024shoal++}, Cordial miners~\cite{keidar2022cordial} and Mysticeti~\cite{babel2023mysticeti}.  

We define a few consensus metrics that will be used in the paper.
\begin{itemize}
    \item {\em Consensus latency}. The consensus latency equals the {\em consensus dissemination latency} plus the {\em consensus ordering latency}. 
    \begin{itemize}
        \item {\em Consensus dissemination latency}. The time duration between any validator $\vn{i}$ calling $\cons.input(\txn)$ and any validator $\vn{j}$ calling $(\tags{Proposal}, \blk)$ such that $\txn\in\blk$.
        \item {\em Consensus ordering latency}. The time duration between any validator $\vn{i}$ calling $(\tags{Proposal}, \blk)$ and any validator $\vn{j}$ calling $\cons.output()\rightarrow (\blk, \sigma)$.
    \end{itemize}
    \item {\em Block interval}. The time duration between any validator $\vn{i}$ calling $\cons.output()\rightarrow (\blk, \sigma)$ and the same validator $\vn{i}$ calling $\cons.output()\rightarrow (\blk', \sigma)$ where $\blk'.\height=\blk.\height+1$.
\end{itemize}

The consensus dissemination latency comprises the time taken to send the transaction to the leader, along with the block queuing latency for the transaction to be included in a block proposal~\footnote{For a consensus protocol with block interval of $T$, the expected block queuing latency is $T/2$ assuming the transactions arrives uniformly at random time. }. If the consensus protocol disseminates transactions in batched fashion, which this paper does, the dissemination latency also includes the batch queuing latency for the transaction to be included in a payload batch.

Usually in the common case, the block interval equals the time duration between two subsequent block proposals.
Several pipelined consensus protocols~\cite{yin2019hotstuff,chan2020streamlet,gelashvili2022jolteon,doidge2024moonshot,spiegelman2022bullshark,spiegelman2023shoal,arun2024shoal++,babel2023mysticeti} chain the block proposals (by hashes) to reuse protocol phases for consecutive proposals for pipelining. As a result, pipelining effectively reduce the block interval.

For the implementation and evaluation (\Cref{sec:eval}), we use \jolteon consensus protocol~\footnote{\jolteon is an improved version of Jolteon~\cite{gelashvili2022jolteon} that reduces consensus dissemination latency~\cite{optqs} and consensus ordering latency~\cite{optjolteon}.} deployed by Aptos blockchain, additionally with techniques from Moonshot~\cite{doidge2024moonshot} to reduce the block interval to a single round.
The resulting consensus protocol satisfies all the aforementioned properties, achieving a consensus dissemination latency of 1.5 rounds plus the batch queuing latency, a consensus ordering latency of 3 rounds, and a block interval of 1 round, while maintaining high throughput and robustness. 
The consensus ordering latency of \jolteon is optimal~\cite{kuznetsov2021revisiting,abraham2021good}). Similar to PBFT~\cite{castro1999practical}, it consists of 1 round of leader broadcasting block proposal, 1 round of vote for \blk, and another round of \tags{OrderVote} for \blk.

\subsubsection*{Execution}\label{sec:prim:execution}
We build the system on top of an execution protocol that has the following interfaces:
\begin{itemize}
    \item $\exe(\bstate, \blk)\rightarrow \bstate_\blk$. The execution protocol takes as input a blockchain state, a block \blk, and outputs the new blockchain state $\bstate_\blk$ after the execution of \blk on top of \bstate. 
\end{itemize}

The execution protocol should be deterministic, i.e., $\exe(\bstate, \blk)$ always produces the same result. 
The blockchain state can have different representations depending on the concrete implementation. In this paper, we assume $\bstate_\blk$ contains inclusion proofs (such as Merkle proof~\cite{merkle1987digital}) for committed transactions and their versions (i.e., position in the committed log) in $\bstate_\blk$. 
    
For the implementation and evaluation (\Cref{sec:eval}), we use BlockSTM~\cite{gelashvili2023block} for parallel execution, and MoveVM~\cite{move} as the smart contract execution engine. 

\subsubsection*{Storage}\label{sec:prim:storage}
We build the system on top of a storage component that has the following interfaces:
\begin{itemize}
    \item $\stg.\stgw(k, v)$. Persist a pair of key and value into the storage. 
    \item $\stg.\stgr(k)\rightarrow v$. Read the value (can be $\bot$) of a given key from the storage. 
\end{itemize}

For simplicity, we abstract the blockchain’s commit phase as persisting the new blockchain state into storage following execution, as the specific implementation details are orthogonal to the opt-commit optimization discussed in this paper (\Cref{sec:zaptos:precmt}). 
In reality, the commit phase may include (1) persisting, per transaction version of the executed block, the transaction itself alone with its side effects
(including events emitted, updates to the state, execution status and error info, gas usage, etc), 
so that such data can be queried by the transaction version; 
(2) calculating and persisting cryptographic summaries (Merkle Trees~\cite{merkle1987digital}) of such data from the executed block, so that any piece of the blockchain raw data is authenticated by an aggregated signature signed by the validators.

For the implementation and evaluation (\Cref{sec:eval}), we use the RocksDB~\cite{rocksdb} as the key-value store and Jellyfish Merkle Tree~\cite{gao2021jellyfish} for data authentication.

\subsection{Metrics}
\label{sec:metrics}
This paper focuses on reducing the {\em end-to-end latency} of the blockchain system while maintaining high {\em throughput}.
The blockchain end-to-end latency (or simply {\em \latency}) refers to the time from when a client submits a transaction to the blockchain system to when the client can receive confirmation that the transaction has been committed (cannot be reverted). 
The throughput represents the number of transactions that can be committed per unit of time, typically measured in transactions per second (TPS).

\section{Pipelined Architecture of Aptos}
\label{sec:baseline}

\begin{figure*}[h!]
    \centering
    \includegraphics[width=0.66\linewidth]{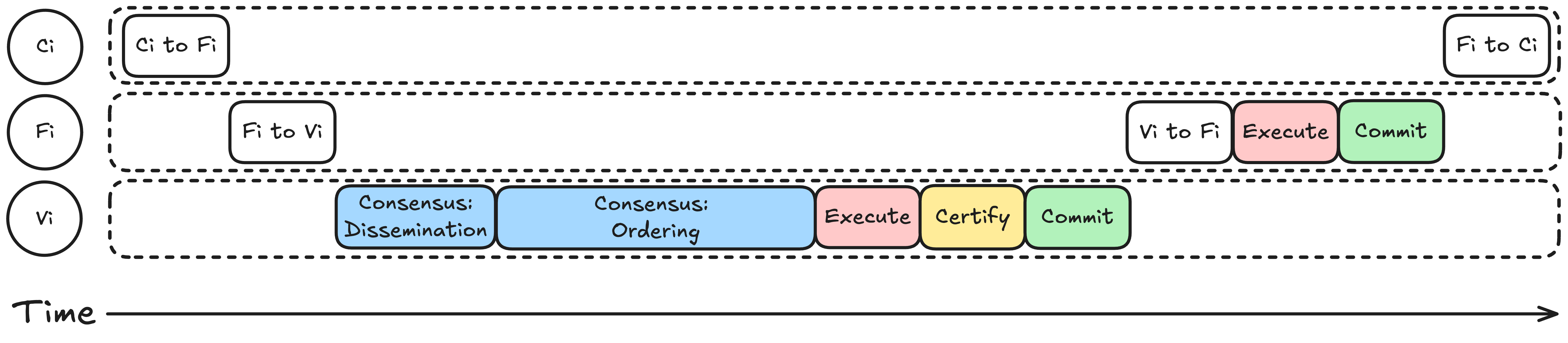}
    \caption{Illustration of the pipelined architecture of modern blockchains such as \baseline~\cite{aptos-whitepaper}. The figure shows client $\cl{i}$, fullnode $\fn{i}$, and validator $\vn{i}$. Each box represents a stage in the blockchain that a block of transactions needs to go through from left to right. The pipeline consists four stages, including consensus (which consists dissemination and ordering), execution, certification and commit. }
    \label{fig:baseline}
\end{figure*}

\begin{figure*}[h!]
    \centering
    \includegraphics[width=0.66\linewidth]{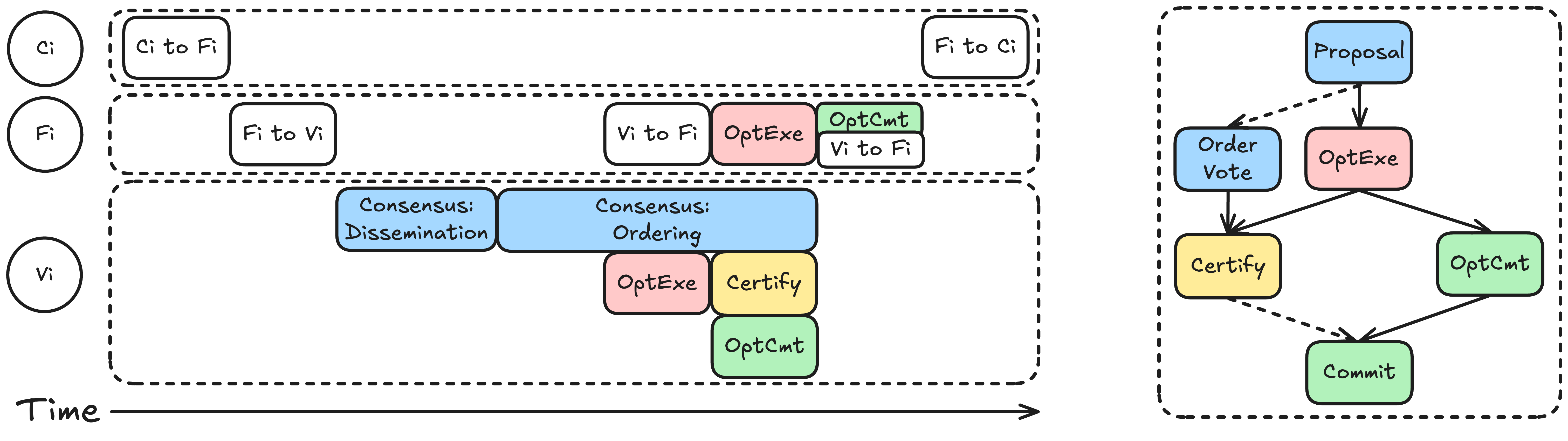}
    \caption{
    Illustration of \name. The left figure illustrates the \name's parallel pipeline architecture. 
    It shows client $\cl{i}$, fullnode $\fn{i}$, and validator $\vn{i}$, and each box represents a stage in the blockchain that a block of transactions needs to go through from left to right. 
    The right figure illustrates the dependencies between stages of the same block.
    Solid arrow denotes direct dependency (a stage starts immediately once all its direct dependencies finishes), dotted arrow denotes indirect dependency (stages connected via more than one direct dependencies). 
    The receipt of block proposal triggers opt-execution. 
    The finishing of opt-execution triggers opt-commit, and triggers certification once \tags{OrderVote} is also sent. 
    When the state is certified and opt-commit finishes, the commit is completed.
    }
    \label{fig:zaptos}
\end{figure*}

\begin{figure}[h!]
    \centering
    \includegraphics[width=\linewidth]{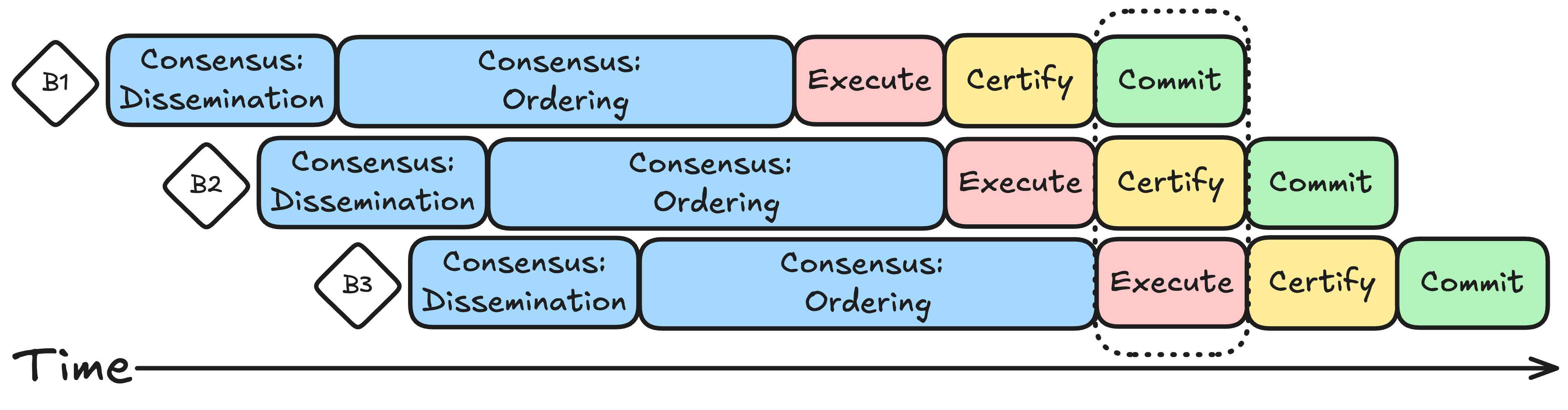}
    \caption{Illustration of the pipelining of consecutive blocks in the \baseline. A validator can pipeline different stages of consecutive blocks, e.g., for blocks $\blk_1,\blk_2,\blk_3$ in the figure, in the dotted slot, the validator can perform in parallel the commit stage  (IO-intensive) of $\blk_1$, the certification stage of $\blk_2$, the execution stage (CPU-intensive) of $\blk_3$, and the consensus stage (network-intensive) of subsequent blocks. 
    In practice, the durations of the stages may vary, leading to imperfect alignment.
    As long as the parallel stages utilize distinct resources, the pipeline improves throughput by maximizing resource utilization compared to non-pipelined designs.
    }
    \label{fig:pipeline-baseline}
\end{figure}

\begin{figure}[h!]
    \centering
    \includegraphics[width=\linewidth]{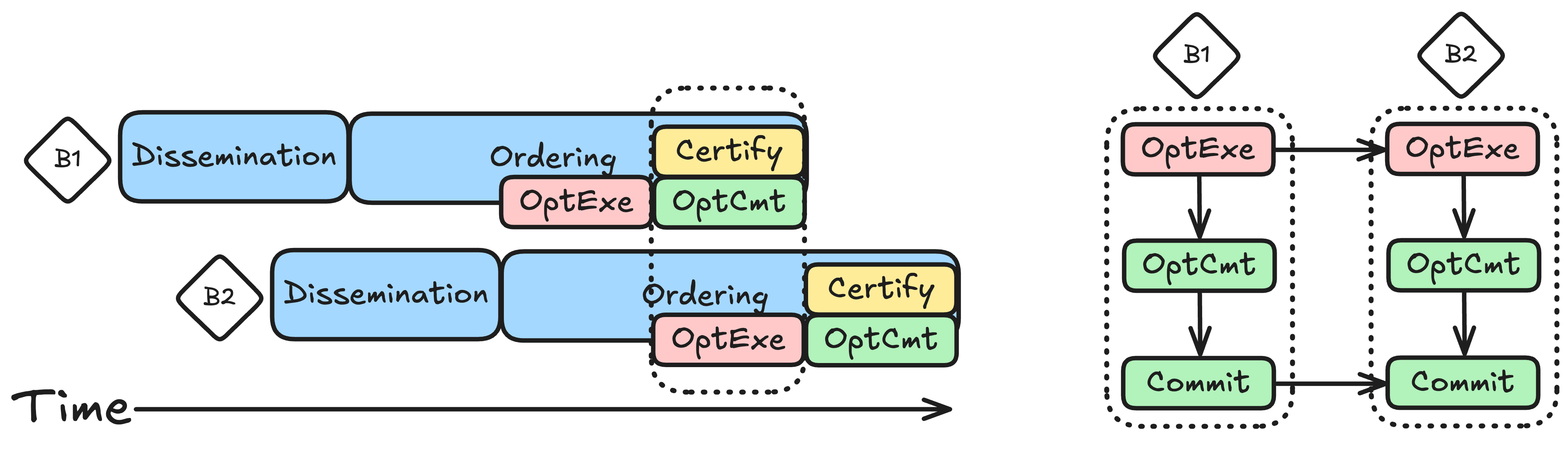}
    \caption{Illustration of the pipelining of consecutive blocks in \name. 
    The left figure illustrates the pipelining.
    Similar to~\Cref{fig:pipeline-baseline}, a validator can also pipeline different stages of consecutive blocks.
    The right figure illustrates the dependencies between stages of consecutive blocks, e.g., the execution and commit stages of block $\blk_2$ also depends on the execution and commit stages of its parent block $\blk_1$, respectively. 
    The boxes representing stages have been slimmed down for presentation purposes only. 
    }
    \label{fig:pipeline-zaptos}
\end{figure}

\begin{algorithm*}[t!]
\caption{Pipelined Architecture of Aptos Blockchain}
\label{alg:baseline}
\begin{multicols}{2}
    \begin{algorithmic}[1]
        \Statex Client $\cl{i}$:
        \Upon{submitting a \txn}
            \State send $(\tags{Submit},\txn)$ to fullnode
        \EndUpon
        \Upon{querying a \txn}
            \State send $(\tags{Query},\txn)$ to fullnode 
            \If{receive $(\tags{Response}, \pos, \txn, \txnp)$ within timeout such that $\txnp.\ver(\pos, \txn)=1$}
                \State \Return (success, \pos)
            \Else
                \ \Return failure
                \Comment{Client may resubmit}
            \EndIf
        \EndUpon
    \end{algorithmic}
    
    \medskip\hrule\medskip
    
    \begin{algorithmic}[1]
    \Statex Fullnode $\fn{i}$:
        \Upon{receiving $(\tags{Submit},\txn)$ from client}
            \State send $(\tags{Submit},\txn)$ to validator
        \EndUpon
        \Upon{receiving $(\tags{Query},\txn)$ from client}
            \If{\txn is committed in position \pos of \cmtstate}
                \State let $\txnp$ be the inclusion proof of $\txn$ in position \pos
                \State send $(\tags{Response}, \pos, \txn, \txnp)$ to client
            \EndIf
        \EndUpon
        \Upon{receiving $(\tags{Committed}, \blk)$ from validator}
            \If{$\blk.\stp=\bot$}
                \Return
            \EndIf
            \State $\pipeline\gets\pipeline\cup\{\blk\}$
            \Comment{Add the block to the pipeline}
        \EndUpon
        \Upon{$\exists\blk\in\pipeline$ s.t. $\blk.\bstate=\bot, \blk'.\bstate\neq\bot$ and $\blk'.\id = \blk.\parent$}
        \Comment{When parent of \blk has been executed}
            \State let $\blk.\bstate\gets \exe(\blk'.\bstate, \blk)$
            \Comment{Execution stage}
        \EndUpon
        \Upon{$\exists\blk\in\pipeline$ s.t. $\blk.\bstate\neq\bot$, $\blk.\stp\neq\bot$ and $\cmtheight = \blk.\height - 1$}
        \If{$\veragg(\blk.\stp, \hash(\blk.\bstate), \pk)=1$}
        \Comment{Commit stage}
            \State let $\cmtheight\gets \blk.\height$, $\cmtstate\gets \blk.\bstate$
            \State $\stg.\stgw(\cmtheight, (\tags{Committed}, \cmtstate))$
        \EndIf
        \EndUpon
            
    \end{algorithmic}
    
    \medskip\hrule\medskip
    
    \begin{algorithmic}[1]
    \Statex Validator $\vn{i}$:
        \Upon{receiving $(\tags{Submit},\txn)$ from Fullnode}
            \State $\cons.input(\txn)$
            \Comment{Consensus stage}
        \EndUpon
        \Upon{$\cons.output()\rightarrow (\blk,\sigma)$}
            \State let $\blk.\odp\gets\sigma$
            \State $\pipeline\gets \pipeline\cup\{\blk\}$
            \Comment{Add the block to the pipeline}
        \EndUpon
        \Upon{$\exists\blk\in\pipeline$ s.t. $\blk.\bstate=\bot, \blk.\odp\neq\bot, \blk'.\bstate\neq\bot$ and $\blk'.\id = \blk.\parent$}
        \Comment{When parent of \blk has been executed}
            \State let $\blk.\bstate\gets \exe(\blk'.\bstate, \blk)$
            \Comment{Execution stage}
            \State let $\sigma_i\gets \sign(\hash(\blk.\bstate), \sk_i)$
            \State let $\blk.\stps\gets \blk.\stps \cup \{(\pk_i, \sigma_i)\}$
            \State send $(\tags{CertifyVote}, \blk.\id, \sigma_i)$ to all validators
        \EndUpon
        
        \Upon{receiving $(\tags{CertifyVote}, id, \sigma_j)$ from validator $\vn{j}$}
            \If{$\exists\blk\in\pipeline$ s.t. $\blk.\id=id$, $\blk.\bstate\neq\bot$, $\blk.\stp=\bot$ and $\ver(\sigma_j, \hash(\blk.\bstate), \pk_j)=1$}
            \Comment{Certification stage}
                \State $\blk.\stps \gets \blk.\stps \cup\{(\pk_j, \sigma_j)\}$
                \If{$|\blk.\stps|\geq 2f+1$}
                    \State let $\blk.\stp\gets\agg(\blk.\bstate,\blk.\stps)$
                \EndIf
            \EndIf
        \EndUpon
        
        \Upon{$\exists\blk\in\pipeline$ s.t. $\blk.\stp\neq\bot$ and $\cmtheight= \blk.\height - 1$}
            \State let $\cmtheight\gets \blk.\height$, $\cmtstate\gets \blk.\bstate$
            \Comment{Commit stage}
            \State $\stg.\stgw(\cmtheight, (\tags{Committed}, \cmtstate))$
            \State $\blk.\bstate\gets\bot$
            \Comment{To save bandwidth}
            \State send $(\tags{Committed}, \blk)$ to fullnode
        \EndUpon
    \end{algorithmic}
  \end{multicols}
\end{algorithm*}

In this section, we present the \baseline~\cite{aptos-whitepaper} architecture, which is a popular blockchain architecture for high-performance blockchains~\cite{diem-whitepaper,aptos-whitepaper,blackshear2024sui,ava-decouple}.
The \baseline employs a {\em pipelined architecture}, allowing different stages of different blocks to execute in parallel. 
The design improves the blockchain performance by maximizing resource utilization.
To the best of our knowledge, the concept of the pipelined architecture was first proposed and productionized by the Diem blockchain~\cite{diem-decouple} (more details in~\Cref{sec:rw:blockchain}), and this paper provides the first comprehensive description.

The original pipeline design of Diem and Aptos blockchains uses doubly linked list to represent the pipeline~\cite{diem-pipeline}, which is conceptually unintuitive and challenging to implement. We simplify the design and implementation using an event-driven framework, as outlined in the pseudo-code in~\Cref{alg:baseline}, and illustrated in~\Cref{fig:baseline}. 
We collaborated with the Aptos Labs team to integrate the simplified pipeline design into the Aptos blockchain~\cite{aptos-pipeline}.
Notably, the event-driven framework also significantly streamlines both the design and implementation of \name, as elaborated in~\Cref{sec:impl}.

The \baseline architecture serves as the baseline for \name, with pseudo-code structured to allow a straightforward description of \name with minimal modifications. 

For brevity, \Cref{alg:baseline} and its description omit implementation details such as message authentication, message verification (e.g., format correctness), error handling (e.g., crash recovery) and garbage collection.

\subsection{The Architecture}\label{sec:baseline:architecture}
We describe the \baseline architecture by tracing the lifecycle of a transaction \txn. 
A client can submit \txn to the fullnode it connects to.
The fullnode, upon receiving \txn from the client, will forward it to the validator it connects to. 
The validator inputs \txn to \cons upon receipt of the transaction.
If the blockchain system is not overloaded, i.e., transaction submission throughput is below the maximum capacity of the blockchain, \cons will order \txn in a few rounds by outputting a block \blk that includes \txn, and its order proof (an aggregated signature $\sigma$ on $\blk.\id$). 
The ordered block \blk is then added to the pipeline (denoted by \pipeline) to go through the following three stages.
\begin{itemize}
    \item {\em Execution stage}. When there exists an unexecuted block \blk in the pipeline such that its parent block has been executed, the validator executes \blk.
    This stage mostly uses CPU resources.
    \item {\em Certification stage}.
    After execution, the validator signs the cryptographic digest of the execution state (represented by $\hash(\blk.\bstate)$ for brevity) and broadcasts the \tags{CertifyVote}. 
    When receiving a \tags{CertifyVote} with valid signature, and the corresponding block \blk has been locally executed but not yet certified, the validator adds the vote to $\blk.\stps$. 
    When the set $\blk.\stps$ reaches a quorum, the validator aggregates the signatures to obtain the aggregated signature that certifies $\blk.\bstate$. 
    This stage uses little computation or bandwidth resources as the \tags{CertifyVote} is small and cheap to compute, but takes one round to receive the \tags{CertifyVote} from a quorum of validators.
    
    The certification stage serves two primary purposes: ensuring safety in cases where the execution stage exhibits non-determinism, and producing a publicly verifiable proof of the new blockchain state which allows fullnodes and clients to independently verify the correctness of the blockchain state. Additional discussions are provided in~\Cref{sec:rw:blockchain}.
    
    \item {\em Commit stage}. If the new certified block is the next in height to be committed, the validator updates the highest committed height and blockchain state, then saves both to storage. 
    This stage mostly uses IO resources. 
    When the commit finishes, the validator sends the newly committed block to the fullnode without including the new state, to reduce bandwidth consumption. 
\end{itemize}

The fullnode, upon receiving the committed block $(\tags{Committed}, \blk)$ from validators, it ensures the state is certified, and adds the block to its pipeline. The pipeline of a fullnode is similar to that of a validator, but without the certification stage. More specifically, the fullnode also executes the blocks~\footnote{
Fullnodes have two options of acquiring the new blockchain state: executing the block locally or synchronizing the execution output from the validators. In practice, the former approach is preferred as it reduces the bandwidth consumption of validators, given that execution outputs can be large. For fullnodes that opt to sync the execution output, \name can also reduce the blockchain's end-to-end latency by enabling earlier synchronization following the validators' optimistic execution.
}, 
and commits the local obtained states once they are certified by the state proof (aggregated signature) provided by the validator, i.e., $\veragg(\blk.\stp, \hash(\blk.\bstate), \pk)=1$.
The client can query whether a transaction \txn is committed on the blockchain. 
The fullnode, upon receiving client's query on \txn, will respond with the inclusion proof \txnp of \txn if \txn is committed in some position \pos according to the latest blockchain state. 
If the client receives the response within a timeout, it verifies if \txnp is valid for \txn, and returns success or failure, respectively.
The client may resubmit the transaction upon failure or timeout. 

\subsection{The Pipelining}\label{sec:baseline:pipeline}
As illustrated in~\Cref{fig:pipeline-baseline}, the pipelined design achieves high blockchain throughput by fully utilizing the different resources of the validators and fullnodes.
Recall that every block progresses through the consensus, execution, certification and commit stages in sequential.
The \baseline architecture uses a pipelined consensus protocol to further reduce the block interval of the consensus stage and therefore reduce the overall blockchain end-to-end latency.
The vanilla \baseline uses Jolteon~\cite{gelashvili2022jolteon}, which has a consensus ordering latency of 5 rounds and a block interval of 2 rounds. 
For the implementation and evaluation of the \baseline and \name in our paper, as mentioned in~\Cref{sec:prim:consensus}, we use the improved version of Jolteon~\cite{gelashvili2022jolteon,optqs,optjolteon} deployed by Aptos blockchain, together with optimizations from Moonshot~\cite{doidge2024moonshot} to reduce block interval. The consensus protocol has an optimal consensus ordering latency of 3 rounds and a block interval of a single round. 

Therefore, as illustrated in~\Cref{fig:pipeline-baseline}, \cons generates a new ordered block every round to enter the pipeline, ideally offset by one round per consecutive block in the pipeline.
To achieve such pipelining, the execution block size can be calibrated to complete execution within approximately one round, matching the duration of the certification stage. 
This alignment allows different stages of consecutive blocks to execute concurrently.
For example, as illustrated in~\Cref{fig:pipeline-baseline}, the commit stage of $\blk_1$, the certification stage of $\blk_2$, and the execution stage of $\blk_3$ all start roughly round the same time and run in parallel with the consensus stage of later blocks. 
Since different stages use distinct resource types (CPU, network bandwidth, IO), the pipelined design maximizes resource utilization, significantly improving the blockchain throughput. 

\section{\name Design}
\label{sec:zaptos}
This section presents the design of \name, which significantly reduces the \latency of the \baseline's pipelined architecture (\Cref{alg:baseline}) through three key optimizations. 
Meanwhile, \name also maintains property of maximizing resource utilization to achieve high throughput. 
The pseudo-code of \name is presented in~\Cref{alg:zaptos}, building on the \baseline's pseudo-code in~\Cref{alg:baseline}. 
Only the differences are highlighted, which may initially appear simple -- reflecting this paper's goal of achieving elegance in architectural design, where protocol simplicity translates to ease of implementation.
Despite this simplicity, the latency improvement is substantial --
as discussed in~\Cref{sec:zaptos:analysis}, \name can reduce the \latency by 5 rounds compared to the \baseline, and can achieve optimal \latency when the underlying consensus protocol also has optimal latency. 

Same as in~\Cref{sec:baseline}, for brevity, \Cref{alg:zaptos} and its description omit implementation details such as message authentication, message verification (e.g., format correctness) and error handling (e.g., crash recovery). 
We will discuss some of the details and considerations of real-world implementation in~\Cref{sec:impl}.

\begin{algorithm}[t!]
  \caption{\name}
  \label{alg:zaptos}
  \begin{minipage}[t]{\columnwidth}
    Fullnode $\fn{i}$: 
    \begin{algorithmic}[1]
        \preexebox{
        \Upon{receiving $(\tags{Proposed}, \blk)$ from validator}
            \State $\pipeline\gets\pipeline\cup\{\blk\}$
            \Comment{Add to pipeline for opt-execution}
        \EndUpon
        }
        \precmtbox{
        \Upon{$\exists\blk\in\pipeline$ s.t. $\blk.\bstate\neq\bot$ and $\cmtheight < \blk.\height$}
            \State $\stg.\stgw(\cmtheight, (\tags{OptCommitted}, \cmtstate))$
            \Comment{Opt-commit}
        \EndUpon
        }
    \end{algorithmic}

    \medskip\hrule\medskip
    
    Validator $\vn{i}$:
    \begin{algorithmic}[1]

        \preexebox{
        \Upon{receiving $(\tags{Proposal}, \blk)$ in \cons}
            \State $\pipeline\gets\pipeline \cup \{\blk\}$
            \Comment{Add to pipeline for opt-execution}
            \State send $(\tags{Proposed}, \blk)$ to fullnode
        \EndUpon
        }
        \precertbox{
        \Upon{$\blk\in\pipeline$, $\blk.\bstate\neq\bot$, $\blk.\odp=\bot$, and
        broadcasting $(\tags{OrderVote}, \blk.\id, \sigma_i=\sign(\blk.\id, \sk_i))$ in \cons}
            \State let $\sigma_i\gets \sign(\hash(\blk.\bstate), \sk_i)$
            \Comment{Certification}
            \State send $(\tags{CertifyVote}, \blk.\id, \sigma_i)$ to all validators
        \EndUpon
        }
        \precmtbox{
        \Upon{$\exists\blk\in\pipeline$ s.t. $\blk.\bstate\neq\bot$ and $\cmtheight < \blk.\height$}
            \State $\stg.\stgw(\cmtheight, (\tags{OptCommitted}, \cmtstate))$
            \Comment{Opt-commit}
        \EndUpon
        }
    \end{algorithmic}
  \end{minipage}
\end{algorithm}

\subsection{Optimistic Execution}
\label{sec:zaptos:preexe}
The change is highlighted in \preexe{red}.
The first optimization, called optimistic execution (or opt-execution), improves the pipeline latency of both validators and fullnodes, by optimistically running the execution stage.

We first explain the changes to the validators. When any validator receives the block proposal $(\tags{Proposal}, \blk)$ in \cons, the validator adds \blk to the pipeline immediately rather than waiting for \blk to be ordered. The validator also sends the proposal to the fullnodes which subscribes to the validator. 
Then, the validator can speculatively execute \blk once the parent block of \blk has been executed, as described in~\Cref{alg:baseline}. 
Since the consensus protocol with optimal consensus ordering latency of three rounds~\cite{castro1999practical,kuznetsov2021revisiting,abraham2021good} consists one round of proposing and two rounds of voting, the optimistic execution allows the validators to start the execution stage of a block in parallel to the second round of the consensus. 
Since the optimistically executed block may not be ordered eventually, if done naively, the validators can still only start the certification stage after the ordering of the block, as in~\Cref{alg:baseline}. The optimization for the certification stage later will address this issue. 

Now we explain the changes to the fullnodes. The fullnode, upon receiving the proposal $(\tags{Proposed}, \blk)$ forwarded by the validator, also adds \blk to the pipeline immediately to speculatively execute \blk once the parent block of \blk has been executed. Same as in the \baseline, the locally computed state is then used for verifying the state proof received from the validator. 
Since the fullnode only commits \blk and responses to the client once \blk's state is certified by the validators, the speculative execution before ordering does not violate safety. 

\subsection{Optimistic Commit}
\label{sec:zaptos:precmt}
The change is highlighted in \precmt{green}.
The second optimization, called optimistic commit (or opt-commit) reduces the commit stage latency for both validators and fullnodes, by allowing blocks to be optimistically committed to storage as soon as the execution stage completes, before the state is certified. 
More specifically, once the execution stage of a block \blk completes, the new state $\blk.\bstate$ to be persisted is already available, allowing the commit stage to initiate (also called opt-commit), abstracted as persisting the value $(\tags{OptCommitted}, \blk.\bstate)$ in the storage for key $\blk.\height$. 
When the validators certified the state, only a minimal update is needed to complete the commit stage, marking the storage entry from $(\tags{OptCommitted}, \blk.\bstate)$ to $(\tags{Committed}, \blk.\bstate)$.
In the case of opt-committed block that is not eventually ordered by consensus, the opt-committed state will be reverted from the storage for data consistency.

\subsection{Certification Optimization}
\label{sec:zaptos:precert}
The change is highlighted in \precert{yellow}.
The final optimization further improves the pipeline latency of the validators, by allowing the validators to start the certification stage of an executed block earlier, rather than waiting for the block to be ordered. 
More specifically, the validators can broadcast $\tags{CertifyVote}$ for a block \blk when broadcasting $\tags{OrderVote}$ for \blk, if \blk is executed. 
This enables the validators to run the certification stage in parallel with the last round of the consensus, effectively reducing the pipeline latency by one round in the common-case. 
Intuitively, the safety of the optimization is guaranteed by the fact that if $\blk.\bstate$ is certified, then \blk is guaranteed to be ordered. We provide the correctness analysis of the optimizations in~\Cref{sec:zaptos:analysis}.

\subsection{Analysis}
\label{sec:zaptos:analysis}

\subsubsection*{Correctness Analysis}\label{sec:zaptos:analysis:correctness}
We prove \name satisfies {\em Safety} and {\em Liveness} defined in~\Cref{def:bftsmr}.

An immediate corollary from the {\em order vote} property and definitions of \cons in~\Cref{sec:prim:consensus}:

\begin{corollary}\label{cor:consensus}
    For any block \blk, if $f+1$ honest validators have broadcasted $(\tags{OrderVote}, \blk.\id, \sigma)$, then all honest validators eventually order \blk after GST.
\end{corollary}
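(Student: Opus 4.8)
The plan is to reduce the statement to a single-height uniqueness claim and then read off the conclusion from the per-height Liveness of $\cons$. First I would invoke Liveness: after GST, every honest validator eventually outputs $\cons.output() \to \blk'$ for some $\blk'$ with $\blk'.\height = \blk.\height$, and by consensus Safety all honest validators agree on this $\blk'$. Thus it suffices to prove that the only block that can be ordered at height $\blk.\height$ is $\blk$ itself; once $\blk' = \blk$ is established, the conclusion that all honest validators eventually order $\blk$ is immediate.

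The heart of the proof is a counting argument ruling out any competing block $\blk' \neq \blk$ at the same height. By the \emph{order vote} property, an honest validator orders a block only after collecting a quorum of $2f+1$ distinct $\tags{OrderVote}$ messages for it. I would combine the hypothesis that $f+1$ honest validators have broadcast $(\tags{OrderVote}, \blk.\id, \sigma)$ with the fact that an honest validator issues at most one $\tags{OrderVote}$ per height: these $f+1$ honest validators therefore never vote for $\blk'$. Any quorum supporting $\blk'$ must then be drawn from the remaining $n - (f+1) = 2f$ validators, which is strictly fewer than the $2f+1$ required. Hence no honest validator ever assembles an order proof for $\blk'$, so $\blk'$ cannot be ordered, forcing $\blk' = \blk$.

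The step that needs the most care is justifying the non-equivocation used above, since the excerpt states Safety at the level of consensus \emph{outputs} rather than at the level of individual $\tags{OrderVote}$ messages. I would resolve this by appealing to honest behavior directly --- the protocol in \Cref{sec:prim:consensus} has each honest $\vn{i}$ cast $(\tags{OrderVote}, \blk.\id, \sign(\blk.\id, \sk_i))$ for the single block it votes to order at a given height --- or, more conservatively, by running the intersection against the $2f+1$ signers underlying any order proof and noting that two conflicting order proofs at one height would contradict consensus Safety via the standard quorum-intersection count ($2(2f+1) - n = f+1$ common signers, at least one of them honest). With non-equivocation in hand, the remaining counting is routine, and combining it with per-height Liveness gives that all honest validators eventually order $\blk$ after GST.
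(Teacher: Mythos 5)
Your overall reduction --- use per-height Liveness plus Safety to conclude that all honest validators eventually output a single common block $\blk'$ at height $\blk.\height$, then argue $\blk'=\blk$ --- matches the endgame of the paper's proof, but the way you rule out a competing $\blk'$ is genuinely different and is where the gap lies. Your counting argument needs two facts that are not among the abstract properties of $\cons$ stated in \Cref{sec:prim:consensus}: (i) that ordering $\blk'$ \emph{requires} assembling $2f+1$ $\tags{OrderVote}$ signatures for $\blk'$ (the order-vote property only states this as a sufficient trigger), and, more importantly, (ii) that an honest validator never broadcasts $\tags{OrderVote}$ for two different blocks at the same height. Non-equivocation of order votes does hold for the concrete protocols the paper instantiates, but it is not part of the stated interface, and your fallback justification is circular: quorum intersection gives you an honest validator signing both order proofs, which is a contradiction only if you already have non-equivocation in hand, and the mere existence of two conflicting order proofs does not by itself contradict the paper's Safety property, which is phrased in terms of consensus \emph{outputs}, not votes.

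The paper closes exactly this hole with a different device: it exhibits an adversary strategy in which the $f$ malicious validators complete the quorum for $\blk$ at a single honest validator $\vn{i}$, so that $\vn{i}$ actually \emph{orders} $\blk$ by the order-vote property; from that point output-level Safety (rather than vote counting) forbids any honest validator from outputting a conflicting $\blk'$, and per-height Liveness then forces every honest validator to output $\blk$. If you want to keep your quorum-counting route, you would need to add honest non-equivocation of $\tags{OrderVote}$ as an explicit assumption on $\cons$ (or derive it per protocol); otherwise, replacing your counting step with the observation that the adversary could already have caused some honest validator to order $\blk$, so that Safety locks the height, repairs the argument using only the stated properties.
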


\begin{proof}
    Consider any consensus protocol that satisfies the properties described in~\Cref{sec:prim:consensus}. Consider an execution where $f+1$ honest validators have broadcasted $(\tags{OrderVote}, \blk.\id, \sigma)$. Let the $f$ malicious validators send their $(\tags{OrderVote}, \blk.\id, \sigma)$ message to only a single honest validator $\vn{i}$. By the {\em order vote} property of the consensus protocol, $\vn{i}$ orders \blk. Let $\vn{i}$ be indefinitely partitioned from rest of all honest validators after receiving the $\tags{OrderVote}$ messages and ordering \blk. 
    If any honest validator outputs $\blk'$ such that $\blk'.\height=\blk.\height$ and $\blk'\neq \blk$, the Safety property of the consensus protocol is violated, contradiction. 
    Then by the Liveness property, all honest validators eventually order \blk after GST.
\end{proof}

\begin{theorem}\label{thm:correctness}
    \name as described in~\Cref{alg:zaptos} implements Byzantine fault tolerant state machine replication (BFT SMR) and achieves Safety and Liveness.
\end{theorem}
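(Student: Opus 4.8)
The plan is to reduce both properties to the guarantees of the underlying \cons (Safety and Liveness from \Cref{sec:prim:consensus}) together with one new invariant introduced by the certification optimization. The key observation is that \name never changes \emph{what} is ultimately committed: by inspection of \Cref{alg:zaptos}, a server records a real, irrevocable commit only when it writes a $(\tags{Committed},\cdot)$ entry, which still requires a valid state proof $\blk.\stp\neq\bot$ (for validators) or a successful $\veragg(\blk.\stp,\hash(\blk.\bstate),\pk)$ (for fullnodes). The opt-execution and opt-commit optimizations add only speculative work: the $(\tags{OptCommitted},\cdot)$ entries are explicitly revertible and therefore do not count as commits in the sense of \Cref{def:bftsmr}. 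Hence the whole argument hinges on state proofs, and I would first isolate the central lemma below.

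For Safety, the main lemma I would prove is: \emph{if a valid state proof for a block \blk ever exists, then all honest validators eventually order \blk after GST.} A state proof aggregates $2f+1$ \tags{CertifyVote}s, so at least $f+1$ come from honest validators. By the certification optimization (\precert{yellow} in \Cref{alg:zaptos}), an honest validator broadcasts $(\tags{CertifyVote},\blk.\id,\cdot)$ for an executed but not-yet-ordered block only in the same step in which it broadcasts $(\tags{OrderVote},\blk.\id,\cdot)$; the only other way an honest validator emits a \tags{CertifyVote} is the baseline path, which fires after $\blk.\odp\neq\bot$, i.e., once \blk has been ordered and an order proof (itself $2f+1$ order votes) already exists. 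Either way, the existence of the state proof forces $f+1$ honest validators to have broadcast $(\tags{OrderVote},\blk.\id,\cdot)$, so \Cref{cor:consensus} gives that all honest validators eventually order \blk. Combined with consensus Safety (a unique block per height) and determinism of \exe (a unique $\blk.\bstate$, hence a unique certifiable state, per height), no two honest servers can write $(\tags{Committed},\cdot)$ for different transactions at the same position, which is exactly Safety; the fullnode case follows since a fullnode commits only after verifying the same state proof.

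For Liveness, after GST consensus Liveness orders every block, including every block containing an honest client's transaction. Once \blk is ordered, each honest validator executes it — via opt-execution in the common case, or via the post-ordering baseline path otherwise — and by determinism of \exe obtains the same $\blk.\bstate$ and broadcasts a matching \tags{CertifyVote}; thus $2f+1$ honest \tags{CertifyVote}s are collected, $\blk.\stp$ is formed, and the commit stage advances $\cmtheight$ and persists $(\tags{Committed},\cdot)$. Validators then forward the block to their fullnodes, which execute, verify $\blk.\stp$, and commit, after which the client's query returns the inclusion proof; hence the transaction is committed at all honest servers. I would also note that opt-committed-but-unordered blocks are reverted and that honest clients resubmit, so no speculative misprediction can stall progress.

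The hard part will be the safety lemma: making the coupling between \tags{CertifyVote} and \tags{OrderVote} fully rigorous across \emph{all} paths by which an honest validator can emit a \tags{CertifyVote}, so that state-proof existence provably implies $f+1$ honest order votes and thus eventual ordering via \Cref{cor:consensus}. A secondary technical point is arguing that opt-committed entries are always reverted whenever the speculated block is not the one ordered at that height, so that the revertible $(\tags{OptCommitted},\cdot)$ writes can be soundly excluded from the notion of a commit used in \Cref{def:bftsmr}.
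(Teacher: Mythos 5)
Your proposal is correct and follows essentially the same route as the paper's proof: safety rests on the observation that the $f+1$ honest $\tags{CertifyVote}$s behind any state proof are either coupled to $f+1$ honest $\tags{OrderVote}$s (the optimization path) or come from a validator that has already ordered \blk (the baseline path, whose order proof again yields $f+1$ honest order votes), so two conflicting certified blocks at one height contradict consensus Safety via \Cref{cor:consensus}; liveness is the same chain of resubmission, consensus Liveness, deterministic execution, certification, and propagation to fullnodes. The only difference is presentational --- you package the argument as a single ``certified implies eventually ordered'' lemma where the paper spells out the four cases on when the honest certifiers voted, and your treatment of $(\tags{OptCommitted},\cdot)$ entries as revertible non-commits matches the paper's unhappy-path discussion rather than appearing in its proof.
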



\begin{proof}
    {\em Safety}.
    Equivalently, we prove that all honest validators and fullnodes commit the same sequence of blocks by showing that, at any height, only a single block is committed. 
    
    First, we prove safety for validators. Suppose for contradiction that two validators commit different blocks $\blk_i \neq \blk_j$ at height $h$. 
    According to the pre-condition of commit, both $\blk_i.\stp$ and $\blk_j.\stp$ were aggregated.
    This implies two sets of $2f+1$ validators, denoted $\cS_i$ and $\cS_j$, must have sent \tags{CertifyVote} messages for $\blk_i$ and $\blk_j$ respectively. Each set contains at least $f+1$ honest validators.
    
    We now examine the cases for when these votes are cast:
    \begin{enumerate}
        \item Suppose all $f+1$ honest validators in $\cS_i$ sent \tags{CertifyVote} for $\blk_i$ when $\blk_i.\odp= \bot$. 
        According to the protocol, these honest validators must also send \tags{OrderVote} for $\blk_i$, implying that all honest validators eventually order $\blk_i$ by~\Cref{cor:consensus}. 
        \begin{enumerate}
            \item\label{item:1:1} If all $f+1$ honest validators in $\cS_j$ also sent \tags{CertifyVote} for $\blk_j$ when $\blk_j.\odp= \bot$, then by~\Cref{cor:consensus}, all honest validators eventually order $\blk_j$, contradicting the Safety property of the consensus protocol.
            \item If at least one honest validator $\vn{j}\in\cS_j$ sent \tags{CertifyVote} for $\blk_j$ when $\blk_j.\odp\neq \bot$, then $\vn{j}$ would output $\blk_j$ in consensus, again contradicting the Safety property.
        \end{enumerate}
        \item Suppose there exists an honest validator $\vn{i}\in \cS_i$ who sent \tags{CertifyVote} for $\blk_i$ when $\blk_i.\odp\neq \bot$.
        \begin{enumerate}
            \item If all $f+1$ honest validators in $\cS_j$ sent \tags{CertifyVote} for $\blk_j$ when $\blk_j.\odp= \bot$, this is symmetric to Case~\ref{item:1:1} and also leads to a Safety violation.
            \item If at least one honest validator $\vn{j}\in \cS_j$ sent \tags{CertifyVote} for $\blk_j$ when $\blk_j.\odp\neq \bot$, then $\vn{j}$ outputs $\blk_j$ in consensus, again contradicting the Safety property.
        \end{enumerate}
    \end{enumerate}
    
    Therefore, for any height $h$, all honest validators aggregate the same $\blk.\stp$ and commit the same block $\blk$. Since only one $\blk.\stp$ exists for each height, fullnodes also commit the same block.

    {\em Liveness}.
    Recall that an honest client will keep resubmitting failed transactions, and clients or fullnodes will switch peers in response to any malicious activity. 
    Eventually, if a transaction \txn has not been committed before GST, the honest client will submit \txn to an honest fullnode after GST, which will then forward \txn to an honest validator. 
    By the Liveness property of the consensus protocol, all honest validators will eventually output the same ordered block \blk containing \txn. 
    Since execution is deterministic, all honest validators will compute the same $\blk.\bstate$ and sign the same message $\hash(\blk.\bstate)$ in \tags{CertifyVote}. 
    Eventually, all honest validators will receive \tags{CertifyVote} message from each other, allowing them to aggregate $\blk.\stp$.
    By safety, for any height $h$, there exists a unique $\blk.\stp$ where $\blk.\height=h$, ensuring that all honest validators will commit \blk by writing the new blockchain state to storage. 
    All honest fullnodes will eventually receive $\blk.\stp$ and subsequently commit \blk.
\end{proof}

\subsubsection*{Latency Analysis}\label{sec:zaptos:analysis:latency}

For the latency analysis for the \baseline's pipelined architecture (\Cref{alg:baseline}) and \name (\Cref{alg:zaptos}), we assume the following:
\begin{itemize}
    \item The client is honest and submits a correctly formatted transaction. 
    \item The client-to-fullnode, fullnode-to-validator and validator-to-validator latencies remain constant, and are denoted as $\delta_{\sf cf}$, $\delta_{\sf fv}$ and $\delta_{\sf vv}$ respectively. The network latency is symmetric for any single connection.
    \item Validators and fullnodes have same consensus, execution and commit stage latencies for a single block, denoted as $T_{\sf con}$, $T_{\sf exe}$ and $T_{\sf cmt}$ respectively. The consensus stage latency $T_{\sf con}$ is the consensus latency mentioned in~\Cref{sec:prim:consensus}.
    \item Any delay in the execution stage due to waiting for the completion of the parent block's execution stage is assumed to be the same in both the \baseline and \name. Consequently, when analyzing \name's improvement over the \baseline, we disregard any execution-stage delays.
    \item We also disregard additional delays caused by “unhappy path” in both the \baseline and \name, such as the client reconnecting to an honest fullnode and resubmitting the transaction, or instances where the consensus re-proposes the block containing the transaction due to previous proposal not getting ordered.
\end{itemize}

\begin{theorem}\label{thm:latency}
    Let $T_{\sf baseline}$ and $T_{\sf zaptos}$ denote the \latency of the \baseline (\Cref{alg:baseline}) and \name (\Cref{alg:zaptos}), respectively. 
    We have
    \begin{align}
    T_{\sf baseline}=2\delta_{\sf cf}+2\delta_{\sf fv}+\delta_{\sf vv}+T_{\sf con}+2T_{\sf exe}+2T_{\sf cmt}
    \end{align}
    \begin{align}
    &T_{\sf zaptos}=2\delta_{\sf cf}+2\delta_{\sf fv}+T_{\sf con}+\max(T_{\sf exe}+T_{\sf cmt}-2\delta_{\sf vv}, 0) \label{eq:zaptos} \\
    & +\max(T_{\sf exe}-\delta_{\sf vv},0)+\max( T_{\sf cmt}-\delta_{\sf vv},0) \nonumber
    \end{align}

    When $T_{\sf exe}\leq \delta_{\sf vv}$ and $T_{\sf cmt}\leq \delta_{\sf vv}$, we have $T_{\sf zaptos}=2\delta_{\sf cf}+2\delta_{\sf fv}+T_{\sf con}$.
\end{theorem}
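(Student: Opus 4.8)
The plan is to prove both identities by a direct critical-path (timeline) analysis of the happy path for a single transaction, under the stated assumptions (constant per-link delays $\delta_{\sf cf},\delta_{\sf fv},\delta_{\sf vv}$, per-block stage costs $T_{\sf con},T_{\sf exe},T_{\sf cmt}$, and disregarding parent-execution waits and unhappy-path retries). For each architecture I would fix the sequence of events a transaction triggers, annotate when each event fires relative to the previous one, and sum the segments lying on the longest (critical) path from the client's submission to the client's confirmation.

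For $T_{\sf baseline}$ the stages are strictly sequential, so the proof is a straight summation along one chain: submit to the fullnode ($\delta_{\sf cf}$), forward to the validator ($\delta_{\sf fv}$), consensus ($T_{\sf con}$), validator execution ($T_{\sf exe}$), certification — which by the \tags{CertifyVote} quorum costs exactly one round ($\delta_{\sf vv}$) — validator commit ($T_{\sf cmt}$), shipping the committed block to the fullnode ($\delta_{\sf fv}$), fullnode re-execution ($T_{\sf exe}$), fullnode commit ($T_{\sf cmt}$), and the confirmation back to the client ($\delta_{\sf cf}$). Adding these gives $2\delta_{\sf cf}+2\delta_{\sf fv}+\delta_{\sf vv}+T_{\sf con}+2T_{\sf exe}+2T_{\sf cmt}$, with the lone $\delta_{\sf vv}$ coming from certification and the doubled $T_{\sf exe},T_{\sf cmt}$ from the validator and the fullnode each executing and committing.

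For $T_{\sf zaptos}$ I would place time zero at the proposal broadcast and expand the optimal three-round consensus ordering as one propose round plus two vote rounds, so the block is ordered $3\delta_{\sf vv}$ later and the whole consensus segment still costs $T_{\sf con}$. The three optimizations then remove stages from this critical path by overlapping them with rounds already being spent: opt-execution starts the validator's execution at proposal receipt, so execution and (by opt-commit) commit run concurrently with the two remaining consensus rounds; the certification optimization piggybacks \tags{CertifyVote} onto the final \tags{OrderVote}, hiding certification under the last round. Hence on the validator only the amount by which execution plus commit overruns its $2\delta_{\sf vv}$ shadowing window survives on the critical path, contributing $\max(T_{\sf exe}+T_{\sf cmt}-2\delta_{\sf vv},0)$. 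On the fullnode, which receives the forwarded \tags{Proposed} block and runs execution and then commit as two pipeline stages clocked at the one-round block interval, each stage contributes only its overflow beyond one round, i.e.\ $\max(T_{\sf exe}-\delta_{\sf vv},0)$ and $\max(T_{\sf cmt}-\delta_{\sf vv},0)$; closing the loop with $\delta_{\sf fv}+\delta_{\sf cf}$ on top of the opening $\delta_{\sf cf}+\delta_{\sf fv}$ yields \eqref{eq:zaptos}. The corollary is then immediate: $T_{\sf exe}\le\delta_{\sf vv}$ and $T_{\sf cmt}\le\delta_{\sf vv}$ force $T_{\sf exe}+T_{\sf cmt}\le 2\delta_{\sf vv}$, so all three $\max$ terms vanish and $T_{\sf zaptos}=2\delta_{\sf cf}+2\delta_{\sf fv}+T_{\sf con}$.

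I expect the main obstacle to be the fullnode accounting that yields two \emph{separate} one-round overflow terms rather than a single combined $2\delta_{\sf vv}$ window. Making this rigorous requires carefully tracking that the fullnode's final commit is gated simultaneously by its locally recomputed state (needed to verify the state proof) and by the validator-supplied state proof, and that the commit stage carries a parent-block pipeline dependency which — unlike the execution waits — is not among the delays we disregard. I would also need to check that the piggybacked certification never becomes the binding constraint ahead of opt-commit, and to verify the $\max$ expressions over all orderings of $T_{\sf exe}$ and $T_{\sf cmt}$ relative to $\delta_{\sf vv}$, so that the overflow decomposition holds in every regime and not merely in the fully-shadowed optimal case.
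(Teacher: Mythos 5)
Your baseline derivation and the overall critical-path strategy match the paper's proof, and your final expression for $T_{\sf zaptos}$ agrees with the theorem. The genuine gap is in the \name half: you attribute the three $\max$ terms to the wrong actors, and the mechanisms you offer for them do not hold. In the paper's derivation the \emph{validator} contributes the two separate terms $\max(T_{\sf exe}-\delta_{\sf vv},0)+\max(T_{\sf cmt}-\delta_{\sf vv},0)$, via $T_{\sf vn}=T_{\sf con}-2\delta_{\sf vv}+\max(\delta_{\sf vv},T_{\sf exe})+\max(\delta_{\sf vv},T_{\sf cmt})$, while the \emph{fullnode} contributes the single combined term $\max(T_{\sf exe}+T_{\sf cmt}-2\delta_{\sf vv},0)$, via $T_{\sf fn}=\max(\delta_{\sf fv}+T_{\sf exe}+T_{\sf cmt}-2\delta_{\sf vv},\,\delta_{\sf fv})$ --- exactly the opposite of your accounting. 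The validator's overflow does not collapse into one combined window because of certification: the validator's commit completes only once the certification round finishes, and certification can begin only after execution finishes (and the \tags{OrderVote} is sent), so execution and commit are each shadowed by their own one-round window with a hard round boundary between them. Concretely, take $T_{\sf exe}=2\delta_{\sf vv}$ and $T_{\sf cmt}=0$: your validator term $\max(T_{\sf exe}+T_{\sf cmt}-2\delta_{\sf vv},0)$ evaluates to $0$, yet the validator in fact commits one round after ordering, since certification cannot start before execution ends at $2\delta_{\sf vv}$ past the proposal.

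Symmetrically, your justification for two separate one-round overflows on the fullnode (``two pipeline stages clocked at the one-round block interval'') has no basis in \Cref{alg:zaptos}: the fullnode has no certification stage and no round gate between its opt-execution and its opt-commit; they run back to back and are shadowed as a single window that closes when the validator's $(\tags{Committed},\blk)$ message arrives, which is precisely what produces the combined term $\max(T_{\sf exe}+T_{\sf cmt}-2\delta_{\sf vv},0)$. Your total matches the theorem only because the three $\max$ terms sum commutatively, not because your stated mechanisms generate them; the derivation would not survive the ``verify over all orderings of $T_{\sf exe}$ and $T_{\sf cmt}$ relative to $\delta_{\sf vv}$'' check you yourself flag. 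To repair it, compute the validator's commit time as $T_{\sf con}-2\delta_{\sf vv}+\max(\delta_{\sf vv},T_{\sf exe})+\max(\delta_{\sf vv},T_{\sf cmt})$, take the fullnode's additional delay relative to that commit time to be $\max(\delta_{\sf fv}+T_{\sf exe}+T_{\sf cmt}-2\delta_{\sf vv},\,\delta_{\sf fv})$, and sum these with the unchanged segments $\delta_{\sf cf}+\delta_{\sf fv}$ on each side.
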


Results from~\Cref{sec:eval:common:breakdown} under medium load also partially validates~\Cref{thm:latency}, where $T_{\sf zaptos}\approx T_{\sf con}$ when $\delta_{\sf cf}$ and $\delta_{\sf fv}$ are negligible.
An immediately corollary of~\Cref{thm:latency} is~\Cref{cor:latency}. 

\begin{corollary}\label{cor:latency}    
    When $T_{\sf exe}\geq \delta_{\sf vv}$ and $T_{\sf cmt}\geq \delta_{\sf vv}$, we have $T_{\sf baseline}-T_{\sf zaptos}= 5\delta_{\sf vv}$, namely \name improves the \latency of the \baseline by 5 rounds. 
\end{corollary}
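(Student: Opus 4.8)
The plan is to derive this corollary purely algebraically from the two closed-form latency expressions already established in~\Cref{thm:latency}; no further protocol-level reasoning is needed, since all the work of identifying which stages overlap and in which rounds has been absorbed into that theorem. My task reduces to resolving the three $\max$ terms in the formula for $T_{\sf zaptos}$ under the stated hypotheses and then subtracting.

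First I would apply the assumptions $T_{\sf exe}\geq\delta_{\sf vv}$ and $T_{\sf cmt}\geq\delta_{\sf vv}$ to each $\max$ appearing in~\eqref{eq:zaptos}. Both $T_{\sf exe}-\delta_{\sf vv}$ and $T_{\sf cmt}-\delta_{\sf vv}$ are nonnegative, so the two single-stage maxima collapse to $T_{\sf exe}-\delta_{\sf vv}$ and $T_{\sf cmt}-\delta_{\sf vv}$. Their sum $T_{\sf exe}+T_{\sf cmt}-2\delta_{\sf vv}$ is a sum of nonnegative quantities, so the combined max $\max(T_{\sf exe}+T_{\sf cmt}-2\delta_{\sf vv},0)$ collapses to $T_{\sf exe}+T_{\sf cmt}-2\delta_{\sf vv}$ as well.

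Substituting these back, I obtain the simplified expression
\begin{align}
T_{\sf zaptos}=2\delta_{\sf cf}+2\delta_{\sf fv}+T_{\sf con}+2T_{\sf exe}+2T_{\sf cmt}-4\delta_{\sf vv}.
\end{align}
I would then subtract this from the expression for $T_{\sf baseline}$ in~\Cref{thm:latency}. The shared terms $2\delta_{\sf cf}+2\delta_{\sf fv}+T_{\sf con}+2T_{\sf exe}+2T_{\sf cmt}$ cancel, and the remaining $\delta_{\sf vv}$ from $T_{\sf baseline}$ combines with the recovered $+4\delta_{\sf vv}$ from $-(-4\delta_{\sf vv})$ to give $T_{\sf baseline}-T_{\sf zaptos}=5\delta_{\sf vv}$, exactly five rounds.

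There is essentially no obstacle in this corollary: the entire difficulty resides in establishing~\Cref{thm:latency}, which I take as given. The only point demanding care is the sign bookkeeping when clearing the three maxima, particularly tracking the $-4\delta_{\sf vv}$ contribution so that it correctly augments the lone $+\delta_{\sf vv}$ already present in $T_{\sf baseline}$; getting this wrong is the one plausible way to land on a spurious round count rather than the claimed five.
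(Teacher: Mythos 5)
Your derivation is correct and matches what the paper intends: it states the corollary as an immediate consequence of~\Cref{thm:latency}, and the algebra is exactly as you describe — under the hypotheses all three maxima resolve to their first arguments, giving $T_{\sf zaptos}=2\delta_{\sf cf}+2\delta_{\sf fv}+T_{\sf con}+2T_{\sf exe}+2T_{\sf cmt}-4\delta_{\sf vv}$, whose difference from $T_{\sf baseline}$ is $\delta_{\sf vv}+4\delta_{\sf vv}=5\delta_{\sf vv}$. Nothing further is needed.
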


\begin{proof}[Proof of~\Cref{thm:latency}]
In the \baseline's pipelined architecture, as shown in~\Cref{fig:baseline}, the \latency consists:
\begin{enumerate}
    \item One client-to-fullnode network delay (denoted as $\delta_{\sf cf}$) and one fullnode-to-validator network delay (denoted as $\delta_{\sf fv}$) to send the transaction.
    \item Validators' consensus stage latency $T_{\sf con}$.
    \item Validators' execution, certification and commit stages latencies, which is $T_{\sf exe}+\delta_{\sf vv}+T_{\sf cmt}$. 
    \item One validator-to-fullnode network delay (same as $\delta_{\sf fv}$) to send certified blocks.
    \item Fullnodes' execution and commit stages latencies (same as $T_{\sf exe}$ and $T_{\sf cmt}$).
    \item One fullnode-to-client network delay (same as $\delta_{\sf cf}$) to send transaction commit confirmation. 
\end{enumerate}

The \latency of the \baseline can be approximated as
\begin{align}
T_{\sf baseline}=2\delta_{\sf cf}+2\delta_{\sf fv}+\delta_{\sf vv}+T_{\sf con}+2T_{\sf exe}+2T_{\sf cmt}
\end{align}

In \name, as shown in~\Cref{fig:zaptos}, the \latency keeps (1), (2), (4) and (6) same as the \baseline, but reduces the latencies of (3) and (5). 
More specifically, for validators in \name, the execution stage starts upon receiving the block proposal (second last round of the consensus ordering), then the certification stage starts upon finishing execution and sending \tags{OrderVote} (last round of the consensus ordering), the commit stage starts upon finishing execution and completes once certification stage finishes.
Hence the duration of validators for all stages will be $T_{\sf vn}=T_{\sf con}-2\delta_{\sf vv}+\max(\delta_{\sf vv}, T_{\sf exe})+\max(\delta_{\sf vv}, T_{\sf cmt})$.
For fullnodes in \name, similarly, the execution stage starts upon receiving the block proposal from the validator (last round of the consensus ordering), and then the commit stage starts upon finishing execution and completes when receiving the committed blocks from the validator. 
Hence the time for fullnodes to commit a block since the validators committed the block will be 
$T_{\sf fn}=\max(\delta_{\sf fv}+T_{\sf exe}+T_{\sf cmt}- 2\delta_{\sf vv}, \delta_{\sf fv})$.
The \latency of \name can be approximated as
\begin{align}
&T_{\sf zaptos}=\delta_{\sf cf}+\delta_{\sf fv}+ T_{\sf vn}+T_{\sf fn}+\delta_{\sf cf}=  
2\delta_{\sf cf}+2\delta_{\sf fv}+T_{\sf con} \label{eq:zaptos} \\
& +\max(T_{\sf exe}-\delta_{\sf vv},0)+\max( T_{\sf cmt}-\delta_{\sf vv},0)+\max(T_{\sf exe}+T_{\sf cmt}-2\delta_{\sf vv}, 0) \nonumber
\end{align}

\end{proof}

\subsection{Discussion}
\label{sec:zaptos:discussion}

\subsubsection*{Latency Optimality}\label{sec:zaptos:discussion:optimality}
As in~\Cref{thm:latency}, when the execution and commit time is bounded by one round, \name has a \latency of $T_{\sf zaptos}=2\delta_{\sf cf}+2\delta_{\sf fv}+T_{\sf con}$. 
Since the round-trip latency of client-to-fullnode and fullnode-to-validator, i.e., $2\delta_{\sf cf}+2\delta_{\sf fv}$, is inevitable, the \latency achieves optimality when the consensus latency $T_{\sf con}$ is optimal. 
As discussed in~\Cref{sec:rw}, for consensus protocols with high throughput and robustness, the consensus latency is at least 4 rounds, even though the lower bound of consensus ordering latency is 3 rounds~\cite{kuznetsov2021revisiting,abraham2021good}.
It remains an interesting open question that if a consensus protocol with optimal 3-round latency can achieve high throughput and robustness, or there exists a latency lower bound of 4 rounds for such protocol. 

\subsubsection*{Throughput}\label{sec:zaptos:discussion:throughput}
In the {\em common case} where the block proposals are ordered by consensus, \name is capable of achieving the \baseline's throughput. 
As illustrated in~\Cref{fig:pipeline-baseline} and~\Cref{fig:pipeline-zaptos}, \name pipelines different stages of consecutive blocks similar to the \baseline. 
Assuming the stages are aligned in both the \baseline and \name, validators perform the consensus, execution, certification, and commit stages of consecutive blocks in parallel at any given time, maximizing resource utilization and supporting high throughput. Experimental evaluations in~\Cref{sec:eval} will support the theoretical analysis of throughput. 

\subsubsection*{Unhappy Path}\label{sec:zaptos:discussion:unhappy}
We discuss the scenario in \name when a block proposal is {\em not} ordered by consensus. 
There are two implications.
\begin{itemize}
    \item In cases where a block is optimistically executed but not eventually ordered by consensus, the computed block state becomes obsolete, wasting computation and potentially degrading system performance. 
    However, in such cases, during consensus, the proposer (leader) of the block times out, causing a more substantial performance degradation in consensus due to the lack of progress during the timeout. 
    Thus, the waste of computational resources is not a primary performance bottleneck in such scenario. 
    \item In cases where a block is optimistically committed but not eventually ordered by consensus, different honest validators or fullnodes may end up with inconsistent storage. This can occur if the leader is malicious or slow, or if the network becomes asynchronous, resulting in only a subset of honest servers optimistically executing and committing the block, which is eventually orphaned in consensus.
    To ensure data consistency across servers, any optimistically committed block will be reverted from storage if it is not eventually ordered by consensus.
\end{itemize}

\subsubsection*{Block Execution Time Calibration}\label{sec:zaptos:discussion:calibration}
We discuss how the blockchain system calibrates the execution time of each block under high load.
This is achieved through two techniques: {\em instruction-level gas calibration}~\cite{eth-gas,apt-gas} for smart contract transactions, and {\em block-level gas limit}~\cite{eth-blockgaslimit,apt-blockgaslimit} for blocks.
The gas assigned to each smart contract instruction measures its execution cost, which includes the cost of computation, I/O and storage. 
The total gas cost of a smart contract transaction is the sum of the gas costs of all its instructions. 
The block gas limit sets system parameters that cap the maximum gas allowed for computation, I/O and storage with a single block. 
During block execution, transactions are finalized~\footnote{In parallel execution~\cite{gelashvili2023block}, the execution of transactions can happen in parallel but the finalization of transactions' executions are sequential.} in the order they appear in the block.
When the accumulated gas of finalized transactions exceeds the block gas limit, block execution stops, and the executed transactions up to that point updates the new blockchain state.
To calibrate block execution time, the gas costs for individual instructions must first be accurately calibrated to reflect actual computation time. Then, the block gas limit is set accordingly, ensuring that block execution will halt at the desired time duration once the gas limit is met.

\subsubsection*{Extension to Client-Validator Model}\label{sec:zaptos:discussion:extension}
Some blockchain systems, such as Solana~\cite{solana}, allow clients to communicate directly with validators. In this model, \name's optimizations (opt-execution, opt-commit, and certification optimization) can still be applied to the validators' pipeline, achieving a latency reduction of 3 rounds when the execution and commit stage latencies exceed one round (similar to~\Cref{cor:latency}).

\section{Implementation}
\label{sec:impl}

We implement the pipelined architectures of \name (\Cref{alg:zaptos})~\footnote{\url{https://github.com/aptos-labs/aptos-core/tree/daniel-paper}} 
in {\sf Rust}~\cite{rust}, atop the open-sourced Aptos blockchain codebase~\cite{aptos}.
We collaborated with the Aptos Labs team to integrate the event-driven pipeline design (\Cref{alg:baseline}) into the Aptos blockchain, and are currently working with them to deploy \name (\Cref{alg:zaptos})~\cite{aptos-pipeline}.

The implementation uses {\sf Tokio}~\cite{tokio} for asynchronous networking, {\sf blstrs}~\cite{blstrs} for cryptography.
The primitives for consensus, execution and storage are mentioned in~\Cref{sec:prim}. 
Next, we highlight several key aspects for modular and efficient implementation.

\subsubsection*{Modularity and Extensibility}
The implementation generally follows the pseudo-code, modularizing the pipeline into distinct stages with well-defined interfaces between directly dependent stages, as illustrated in~\Cref{fig:zaptos} and~\Cref{fig:pipeline-zaptos}. 
These dependencies are managed using {\sf Rust}'s {\sf Future} feature~\cite{rust-future}, enabling asynchronous computation to produce the output of each stage so that subsequent stages can start as soon as the future is fulfilled.
Leveraging {\sf Rust}’s asynchronous programming, the implementation supports concurrent execution of pipeline stages, maximizing resource utilization for high throughput.
This modular design not only promotes high performance but also enables extensibility: new stages can be added by adjusting the interfaces of adjacent stages without modifying the contents of other stages or the rest of the pipeline.
Finally, the pipeline implementation can be reused by both validators and fullnodes, with each selecting specific stages to include.

\subsubsection*{More Stages}
In the actual implementation, the pipeline stages are further subdivided to improve parallelism and reduce latency. For example, the execution stage is divided into two sub-stages: executing the block to generate the new blockchain state, and merklizing the new state (computing the Merkle Tree proofs). 
This subdivision allows the execution stage of a block to begin as soon as the first sub-stage of the parent block's execution stage completes, thereby reducing pipeline latency.
The implementation also includes additional stages not detailed in~\Cref{alg:zaptos}. For instance, prior to the execution stage, an execution preparation stage retrieves the transaction payloads corresponding to metadata ordered by the consensus stage; following the commit stage, a post-commit stage is responsible for notifying other system components about committed transactions to prevent duplication.

\subsubsection*{Error Handling}
In the pipeline, each stage may fail with recoverable errors or unrecoverable errors.
If a stage fails with a recoverable error, the stage itself will be retried. 
When a stage fails with an unrecoverable error, subsequent stages skip execution and only propagate the error. 
The validator will handle the error depending on its type. For example, for an unrecoverable error, the validator may fallback to syncing the latest committed state from other validators.

\section{Evaluation}
\label{sec:eval}

In this section, we evaluate the throughput-latency performance of \name with geo-distributed experiments, and compare it with the \baseline (\Cref{alg:baseline}).
For fair comparison, the implementation and evaluation use the same primitives for each pipeline stage, as described in~\Cref{sec:prim}.

\subsection{Setup}\label{sec:eval:setup}

\paragraph{Machine Specifications.}
We used virtual machines from the Google Cloud Platform for the experiments. We used two variants of \texttt{n2d} series machines, namely \texttt{n2d-standard-32} with 32 cores and 128 GB memory and \texttt{n2d-standard-64} with 64 cores and 256 GB memory. This is to show that the execution is compute-bound and the overall system throughput is bounded by execution -  increasing compute increases throughput. Each machine has a 2TB network attached disk to guarantee enough IOPS for persistence.

\paragraph{Geo-distribution.}
The testbed consists of virtual machines in 10 different regions to mimic a globally decentralized network. This includes 2 regions in the US
(us-west1 and us-east1), 2 regions in Europe (europe-west4
and europe-southwest1), 1 region each in South America
(southamerica-east1), South Africa (africa-south1) and Australia (australia-southeast1), and 3 in Asia (asia-northeast3,
asia-southeast1, and asia-south1). The round-trip times between these regions range between 25ms and 317ms. For our experiments, we deployed 100 validators and 30 fullnodes evenly across 10 regions.

In our evaluation, we simplify the experiments by assuming constant and negligible latencies for both client-to-fullnode and fullnode-to-validator communication, as these latencies are identical for \name and \baseline and do not impact the comparative results.
Therefore, we let each fullnode and client to be co-located with a validator.

\paragraph{Metrics.}
We measure the blockchain's end-to-end latency as the fullnode end-to-end latency, measured from the fullnode receives the client's transaction to fullnodes commits the client's transaction, as the client-to-fullnode latency is negligible. Unless otherwise stated, we measure the 50th percentile latencies.

\paragraph{Workload.}
Each client transaction consists of 300-byte payload encoding peer-to-peer transfer from a source account to a target account along with metadata such as max gas fees and expiration timestamp. The clients send transactions in an open-loop at the target throughput rate.

\begin{figure}
    \begin{subfigure}{0.5\textwidth}
        \centering
        \includegraphics[width=\columnwidth]{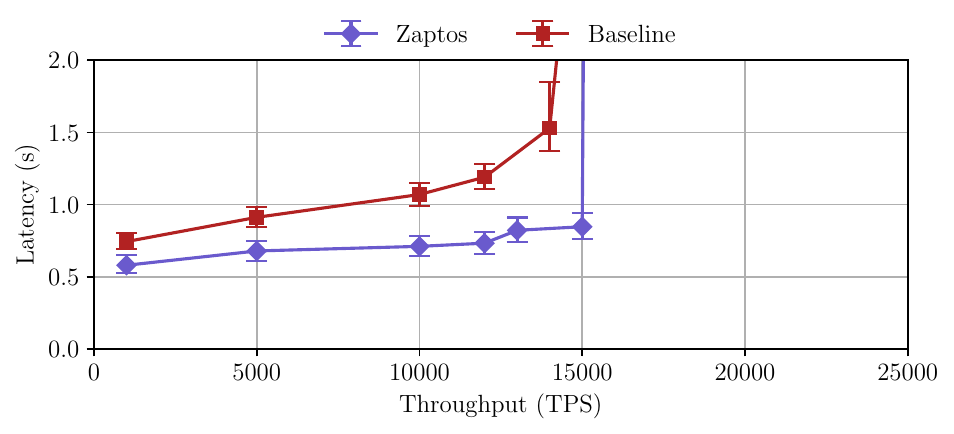}
        \caption{32 CPUs}
        \label{fig:commoncase-e2e-perf-32}
    \end{subfigure}
    \begin{subfigure}{0.5\textwidth}
        \centering
        \includegraphics[width=\columnwidth]{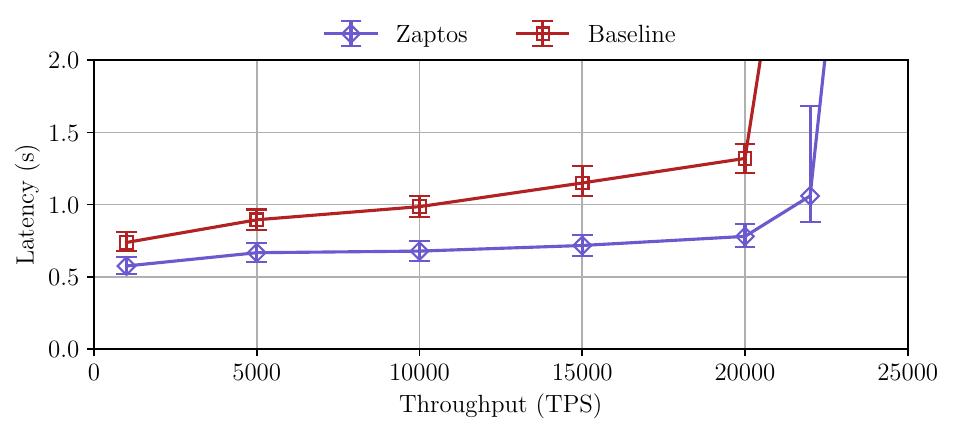}
        \caption{64 CPUs}
        \label{fig:commoncase-e2e-perf-64}
    \end{subfigure}
    \caption{Common case performance of \name and \baseline.}
    \label{fig:commoncase-e2e-perf}
\end{figure}

\subsection{Common Case}\label{sec:eval:common}
We first compare the performance of \name and \baseline in the common case, where the validators, fullnodes and clients are non-faulty, and the network is well connected. 
We use the terms latency and \latency interchangeably to refer to the blockchain's end-to-end latency.

\subsubsection*{Throughput-latency Graph}\label{sec:eval:common:Lgraph}

\Cref{fig:commoncase-e2e-perf} shows the latency with respect to throughput graph of \name and \baseline in the common case. The vertical bars for each data point represent the 25th and the 75th percentile latencies.
As depicted, both systems have a gradual increase in \latency as the system load grows. 
However, when reaching maximum capacity, \latency spikes sharply due to a pronounced rise in block queuing latency.
For instance, on 32-cpu machines, \baseline's latency rises from 0.75s at 1k TPS to 1.53s at 14k TPS, while \name's latency increases from 0.58s at 1k TPS to 0.85s at 15k TPS.
Similarly, on 64-cpu machines, \baseline's latency grows from 0.74s at 1k TPS to 1.32s at 20k TPS, whereas \name's latency increases from 0.58s at 1k TPS to 0.78s at 20k TPS. 
Across both machine types, \name outperforms \baseline as expected, significantly reducing \latency by 160 ms under low load and over 0.5 second under high load.
Notably, \name achieves subsecond \latency at 20k TPS with a production-grade implementation tested in a realistic, mainnet-like environment.

\subsubsection*{Latency Breakdown}\label{sec:eval:common:breakdown}

\begin{figure*}
    \begin{subfigure}{0.5\textwidth}
        \centering
        \includegraphics[width=\columnwidth]{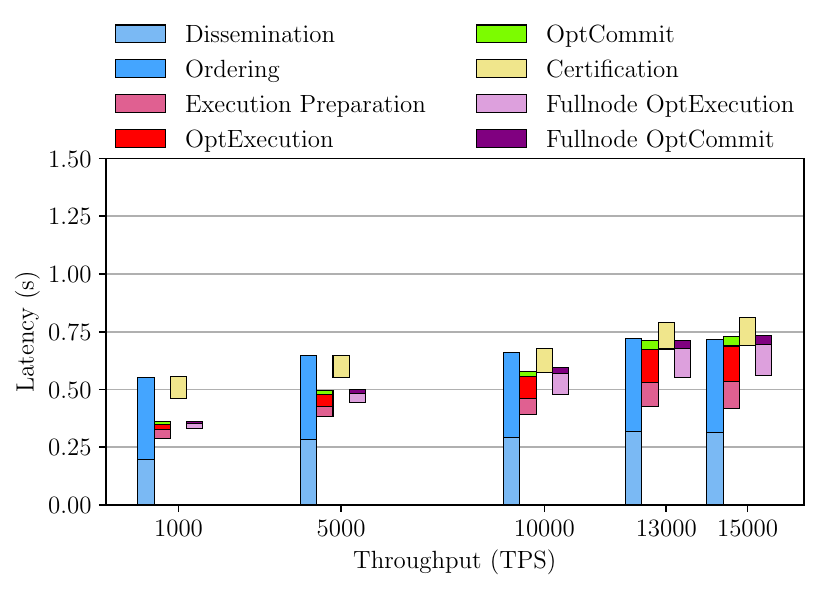}
        \caption{\name}
        \label{fig:breakdown-zaptos-32}
    \end{subfigure}%
    \begin{subfigure}{0.5\textwidth}
        \centering
        \includegraphics[width=\columnwidth]{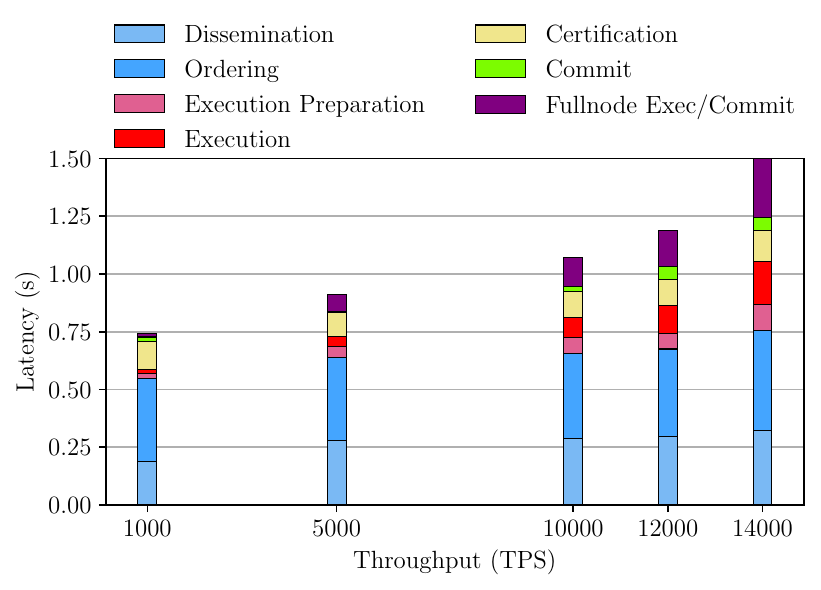}
        \caption{\baseline}
        \label{fig:breakdown-baseline-32}
    \end{subfigure}
    \caption{Latency breakdown of common case: 32 CPUs.}
    \label{fig:breakdown-32}
\end{figure*}

\begin{figure*}
    \begin{subfigure}{0.5\textwidth}
        \centering
        \includegraphics[width=\columnwidth]{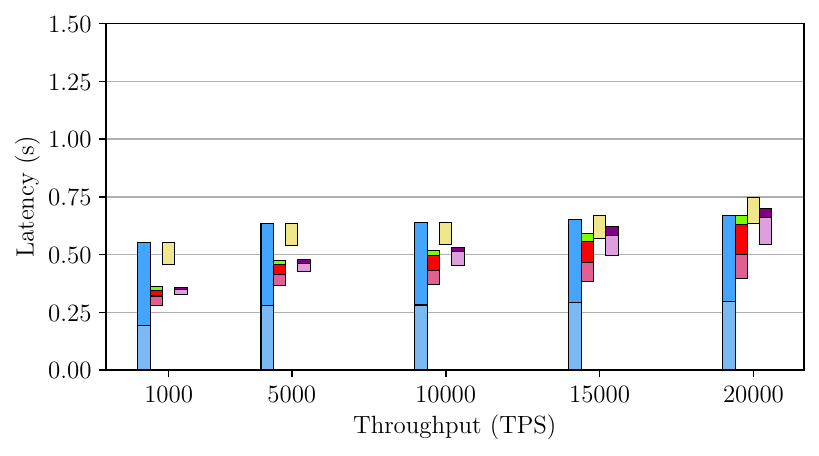}
        \caption{\name}
        \label{fig:breakdown-zaptos-64}
    \end{subfigure}%
    \begin{subfigure}{0.5\textwidth}
        \centering
        \includegraphics[width=\columnwidth]{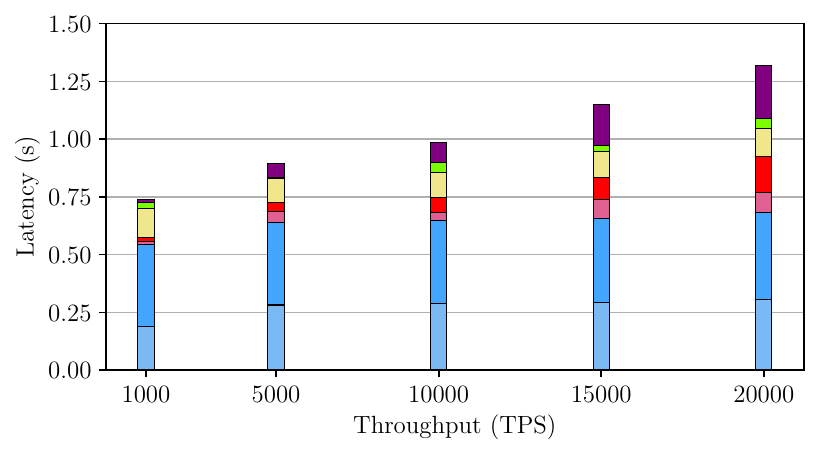}
        \caption{\baseline}
        \label{fig:breakdown-baseline-64}
    \end{subfigure}
    \caption{Latency breakdown of common case: 64 CPUs.}
    \label{fig:breakdown-64}
\end{figure*}

To further analyze system performance and validate the optimizations in \name, in Figures~\ref{fig:breakdown-32} and~\ref{fig:breakdown-64}, we provide a detailed latency breakdown of the data points from the throughput-latency graph (\Cref{fig:commoncase-e2e-perf}). 
The latency breakdown graph depicts the duration of each pipeline stage for both validators and fullnodes. 

For \baseline, the latency breakdown includes the following stages: consensus dissemination, consensus ordering, execution preparation, execution, certification, commit, and fullnode stages including execution preparation, execution and commit (grouped and represented as a single bar in the graph for simplicity). 
The execution preparation stage involves fetching the transaction payloads corresponding to metadata ordered by the consensus stage, and waiting for the execution stages of ancestor blocks to finish. 
In \baseline, all stages are processed sequentially for each block, as illustrated in~\Cref{fig:breakdown-baseline-64}.

In contrast, \name introduces parallel pipelining. 
Similarly, its latency breakdown includes the following stages: consensus dissemination, consensus ordering, execution preparation, optimistic execution, certification, optimistic commit, and fullnode optimistic execution and optimistic commit. 
However, in \name, stages for the same block can overlap and execute in parallel as long as their preconditions are met, as detailed in~\Cref{alg:zaptos}. Dependencies between stages are illustrated in~\Cref{fig:zaptos}. 
For instance, receipt of a block proposal triggers execution preparation and subsequently opt-execution;
and opt-execution completion triggers opt-commit, and certification once an \tags{OrderVote} is sent (last round of consensus).

From Figures~\ref{fig:breakdown-32} and~\ref{fig:breakdown-64}, we make the following key observations:
\begin{itemize}
    \item The \latency of \name is approximately equal to the consensus latency, up to 5k TPS on 32-cpu machines and 10k TPS on 64-cpu machines. During this range, the execution, certification and commit stages for both validators and fullnodes are effectively "shadowed" within the consensus stage. This partially validates~\Cref{thm:latency}.
    
    For instance, \baseline on 64-cpu machines at 10k TPS, the non-consensus stages contribute 338 ms, which is more than 50\% of the consensus latency of 648ms (dissemination latency: 288ms, ordering latency: 360ms). In contrast, \name achieves a 30\% reduction in \latency by compressing the pipeline. 
    
    \item As TPS increases, the non-consensus stages are no longer fully shadowed within the consensus stage. 
    This is primarily due to the increased execution preparation required to wait or fetch larger blocks, and the longer duration of the opt-execution stage. 
    When execution becomes the primary bottleneck for TPS under heavy load, increasing the number of CPUs from 32 to 64 enables both \baseline and \name to achieve higher peak TPS.
    While the certification stage still completes within a single round, it can only begin after the opt-execution is finished. 
    The opt-commit stage remains relatively short, since the peer-to-peer transaction workload used in evaluation is not commit-intensive. 
    However, for workloads involving transactions that write substantial amounts of on-chain data, the opt-commit stage could emerge as a bottleneck as TPS increases.
    For fullnodes, the opt-execution and opt-commit stages follow shortly after those of the validators, effectively reducing the overall pipeline duration. 
    
    Despite partial overlap of the stages at maximum throughput, \name significantly reduces latency by shadowing majority of the stage durations.
    For instance, at 20k TPS on 64-core machines, 
    \baseline exhibits a total latency of 1.32s (consensus latency: 0.68s; other stages: 0.64s), whereas \name reduces this to 0.78s (consensus latency: 0.67s; other stages: 0.11s).
    
    \item The consensus dissemination latency remains a bottleneck of the \latency under high load. As described in~\Cref{sec:prim:consensus}, this latency comprises the following:
    Batch queuing latency which is the expected time for a transaction to be included in a payload batch (\textasciitilde 110ms). This is caused by an intentionally high payload batch creation interval (\textasciitilde 200ms) to prevent network overload;
    Block queuing latency which is the expected time for a transaction payload batch to be included in a block (\textasciitilde 80ms), arising from a block interval of \textasciitilde 150ms;
    Broadcast latency which is the time required for one round of broadcasting the payload batch (\textasciitilde 120ms).
    These components collectively result in dissemination latencies of \textasciitilde 300ms under load. 
    Further improving the dissemination latency presents a interesting engineering challenge for future work.
    
\end{itemize}

\subsection{Under Byzantine Failures}\label{sec:eval:failure}

\begin{figure}
    \centering
    \includegraphics[width=\columnwidth]{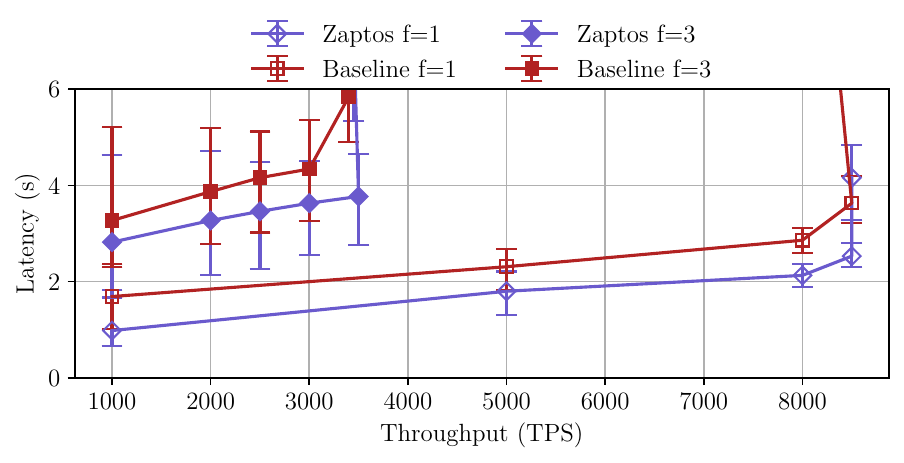}
    \caption{Failure case performance of \name and \baseline, under leader equivocation and round-robin leader rotation (leader reputation~\cite{cohen2022aware} disabled).}
    \label{fig:failures}
\end{figure}

We demonstrate that \name achieves better average system performance even in presence of failures. 
The evaluation focuses on the leader equivocation failure scenario, where a faulty leader sends distinct block proposals to different validators. 
As discussed in~\Cref{sec:zaptos:discussion:unhappy}, leader equivocation may result in resource wastage when a block is opt-executed and opt-committed but eventually not ordered. 
A partitioned or slow leader can be regarded as a special case of an equivocating leader, where the faulty leader sends its proposal to only a subset of the validators rather than the entire network.
In the case of crash failures, \name improves average performance, performing comparably to \baseline under crashed leaders and outperforming it under non-faulty leaders.
Consequently, we focus our evaluation on scenarios with equivocating leaders.

We compare \name and \baseline in a small-scale network comprising of 10 validators and 10 fullnodes distributed evenly across 10 regions, as described in~\Cref{sec:eval:setup}. 
Our evaluation includes scenarios with 1 and 3 faulty validators acting maliciously by proposing distinct blocks to different validators during their turn as leaders. For these tests, we use \texttt{n2d-standard-32} machines (32 cores, 128 GB memory).
To reflect the impact of malicious leaders, we use a naive round-robin leader rotation schedule. 

As shown in~\Cref{fig:failures}, under equivocating leaders and round-robin leader rotation, the system performances of both \name and \baseline degrade significantly, compared to the common case (\Cref{sec:eval:common}). 
Specifically, the peak throughput of both systems decreases to approximately 8k TPS with 1 faulty validator and 3.5k TPS with 3 faulty validators. 
In this case, \name also demonstrates significantly better average \latency than \baseline. 
For example, with one faulty validator, \name achieves a latency improvement of approximately 0.5s to 0.7s compared to \baseline. Similarly, with three faulty validators, the latency improvements range from around 0.45s to 0.7s.
This improvement is attributed to the fact that during rounds with faulty leaders, both systems experience consensus stalling as the primary bottleneck. However, in rounds with non-faulty leaders, \name achieves a substantial reduction in pipeline latency.

While leader reputation protocols~\cite{cohen2022aware} are commonly employed in real-world deployments to select high-performing leaders and enhance system stability, we intentionally disable them in the evaluation above to focus on performance under failure scenarios.
Our additional testing with the leader reputation protocol shows that once the mechanism is activated to filter out faulty leaders, the system's performance recovers to levels comparable to the common case. Therefore, we do not report the numbers in the paper.

\section{Related Work}
\label{sec:rw}

\subsubsection*{Blockchain architectures.}\label{sec:rw:blockchain}
We classify existing blockchain architectures into three categories based on how their pipeline stages interact: coupled-consensus-execution, execution-then-consensus, and consensus-then-execution. In the latter two categories, consensus and execution are decoupled as separate stages. 

In the coupled-consensus-execution architecture (illustrated in~\Cref{fig:coupled}), the consensus stage is tightly integrated with the execution of blocks which determines the new blockchain state during consensus. For instance, in leader-based protocols, validators execute a block after the leader's proposal and vote on the resulting new blockchain state. The output of the consensus stage includes the finalized state. 
Representative chains that use this architecture include Ethereum PoS~\cite{ethereumpos}, Solana~\cite{solana}, Algorand~\cite{gilad2017algorand}, Cosmos~\cite{cason2021design}, Redbelly~\cite{crain2021red}, NEAR~\cite{near} and Bitcoin~\cite{nakamoto2008bitcoin}. XRP~\cite{xrp} and Stellar~\cite{mazieres2015stellar} do not organize transactions as blocks but they also couple consensus together with execution of transactions.

The execution-then-consensus architecture is first introduced in HyperLedger~\cite{androulaki2018hyperledger}. As illustrated in~\Cref{fig:execution-then-consensus}, validators first execute a list of transactions locally, producing execution outputs. These outputs are then subjected to a consensus process to agree on their ordering and, consequently, the new blockchain state. 

In the consensus-then-execution architecture (illustrated in~\Cref{fig:consensus-then-execution}), validators initially reach a consensus on a new block extending the blockchain. Execution of the ordered block follows, producing the updated blockchain state. Ideally, deterministic execution ensures consistency across validators. However, non-deterministic execution caused by software bugs or hardware issues can lead to inconsistent states. 
To address this issue, some systems introduce a certification stage prior to commit, as seen in Diem~\cite{diem-whitepaper}, Aptos~\cite{aptos-whitepaper} and Sui~\cite{blackshear2024sui}. This stage involves validators collectively signing the new state. Alternatively, other systems, such as Porygon~\cite{chen2024porygon} (illustrated in~\Cref{fig:porygon}), run a second consensus on the new state by piggybacking on subsequent consensus instances. 
The first approach has lower latency, but compromises liveness when non-deterministic execution occurs, and no majority of the validators sign the same state. 
Both approaches produce a publicly verifiable proof of the new blockchain state, enabling clients to confirm that their transaction has been successfully committed to the blockchain.
In Sui Lutris~\cite{blackshear2024sui}, an additional consensusless path is introduced to bypass the consensus stage for a subset of transactions involving only single-writer operations. 

To the best of our knowledge, the Diem blockchain~\cite{diem-decouple} was the first to design and implement a pipelined architecture, which has since been adopted by several major blockchains, including Aptos~\cite{aptos}, Avalanche~\cite{ava-decouple} and Sui~\cite{sui}.
This paper employs \baseline as the baseline for \name.
Recently, Porygon~\cite{chen2024porygon} proposes a 3D parallelism architecture aimed at improving blockchain system throughput and latency, which includes a transaction processing pipeline that pipelines the consensus, execution, and commit stages. 
The key differences between the pipeline architecture of Porygon~\cite{chen2024porygon} and \baseline~\cite{diem-decouple} are as followings:
(1) The \baseline~\cite{diem-decouple} uses a pipelined consensus protocol (an improved version of Jolteon~\cite{gelashvili2022jolteon, optqs, optjolteon}), which pipelines internal consensus stages to reduce the number of consensus messages and block interval (2 rounds). 
    In contrast, Porygon~\cite{chen2024porygon} employs a non-pipelined consensus protocol (BA*~\cite{gilad2017algorand}) and has a block interval at least equal to the full consensus ordering latency (4 rounds). It is because without pipelining, each consensus instance begins only after the previous instance finishes. 
    The longer block interval leads to higher blockchain end-to-end latency, as the expected queuing delay for consensus increases with block interval duration.
(2) In Porygon~\cite{chen2024porygon}, during the certification stage (part of the commit stage in their paper), the updated blockchain state after execution is piggybacked with the next consensus block proposal, for validators to reach agreement. 
    This approach results in the certification stage taking as long as the consensus stage. In contrast, the \baseline handles certification in a single round, significantly reducing latency.

\subsubsection*{Consensus latency.}\label{sec:rw:consensus}
As mentioned in~\Cref{sec:prim:consensus}, consensus latency consists of consensus dissemination latency and consensus ordering latency. 
A significant body of research has focused on reducing the consensus ordering latency of Byzantine fault tolerant (BFT) consensus protocols under partial synchrony, which can broadly be categorized into leader-based and DAG-based approaches.

For leader-based protocols, PBFT~\cite{castro1999practical} pioneered the field with an optimal consensus ordering latency of 3 rounds, though it lacked efficient data dissemination, leader rotation, and pipelining.
Decoupling data dissemination is crucial for maximizing consensus throughput, as it allows reaching agreement solely on the metadata. 
In the blockchain era, leader rotation is cruicial for ensuring fairness and aligning economic incentives within the network. 
Additionally, pipelining reduces message complexity and the variety of message types, thereby simplifying the protocol's implementation.
Since PBFT, various improvements have been made to address these limitations~\cite{buchman2016tendermint,buterin2017casper,yin2019hotstuff,gelashvili2022jolteon,doidge2024moonshot,danezis2022narwhal}. HotStuff~\cite{yin2019hotstuff}, in particular, achieves pipelining and leader rotation with linear message complexity, at the expense of increased consensus ordering latency. 
Subsequent works~\cite{gelashvili2022jolteon, doidge2024moonshot, optjolteon} improve the consensus ordering latency.
DAG-based protocols are inherently designed for efficient data dissemination, seamless leader rotation, and effective pipelining. Recent works~\cite{keidar2021all,danezis2022narwhal,spiegelman2022bullshark,spiegelman2023shoal,babel2023mysticeti,arun2024shoal++} have focused on reducing latency while achieving high throughput.

It is well known that the lower bound of the consensus ordering latency is 3 rounds under partial synchrony and optimal fault tolerance~\cite{kuznetsov2021revisiting,abraham2021good}, and many leader-based BFT protocols~\cite{castro1999practical,doidge2024moonshot} achieves this optimal 3-round consensus ordering latency. 
However, these protocols require an additional round of dissemination to include transactions (or metadata of transaciton batches) from all validators in the proposal to achieve high throughput, resulting in a total consensus latency of 4 rounds.
Similarly, in the latest DAG-based BFT protocols~\cite{keidar2022cordial,babel2023mysticeti}, the latency includes an implicit 1 round for dissemination followed by 3 rounds for ordering. 

A multi-leader optimization has been proposed in several studies for both leader-based protocols~\cite{stathakopoulou2019mir} and DAG-based protocols~\cite{spiegelman2023shoal}, reducing dissemination latency by ordering proposals from multiple validators for the same round. Ideally, this approach could eliminate dissemination latency entirely by treating all validators as leaders. However, in practice, a large multi-leader setup is not resilient: if any validator in the set is slow or malicious, the entire set of proposals may fail. Consequently, even a small subset of slow or malicious validators could severely degrade system performance or even compromise liveness.

\section{Conclusion}

This paper introduces \name, a novel blockchain pipeline architecture designed to achieve low latency by preemptively parallelizing pipeline stages, and high throughput by maximizing resource utilization through effective pipelining.

\begin{acks}
We thank the Aptos Labs Research team and Engineering team for valuable discussions and feedbacks, especially Rati Gelashvili, Alden Hu, Igor Kabiljo and Joshua Lind.
\end{acks}

\bibliographystyle{ACM-Reference-Format}
\bibliography{ref}


\begin{thebibliography}{62}


\ifx \showCODEN    \undefined \def \showCODEN     #1{\unskip}     \fi
\ifx \showDOI      \undefined \def \showDOI       #1{#1}\fi
\ifx \showISBNx    \undefined \def \showISBNx     #1{\unskip}     \fi
\ifx \showISBNxiii \undefined \def \showISBNxiii  #1{\unskip}     \fi
\ifx \showISSN     \undefined \def \showISSN      #1{\unskip}     \fi
\ifx \showLCCN     \undefined \def \showLCCN      #1{\unskip}     \fi
\ifx \shownote     \undefined \def \shownote      #1{#1}          \fi
\ifx \showarticletitle \undefined \def \showarticletitle #1{#1}   \fi
\ifx \showURL      \undefined \def \showURL       {\relax}        \fi
\providecommand\bibfield[2]{#2}
\providecommand\bibinfo[2]{#2}
\providecommand\natexlab[1]{#1}
\providecommand\showeprint[2][]{arXiv:#2}

\bibitem[Abraham et~al\mbox{.}(2021)]%
        {abraham2021good}
\bibfield{author}{\bibinfo{person}{Ittai Abraham}, \bibinfo{person}{Kartik Nayak}, \bibinfo{person}{Ling Ren}, {and} \bibinfo{person}{Zhuolun Xiang}.} \bibinfo{year}{2021}\natexlab{}.
\newblock \showarticletitle{Good-case latency of byzantine broadcast: A complete categorization}. In \bibinfo{booktitle}{\emph{Proceedings of the 2021 ACM Symposium on Principles of Distributed Computing}}. \bibinfo{pages}{331--341}.
\newblock


\bibitem[Androulaki et~al\mbox{.}(2018)]%
        {androulaki2018hyperledger}
\bibfield{author}{\bibinfo{person}{Elli Androulaki}, \bibinfo{person}{Artem Barger}, \bibinfo{person}{Vita Bortnikov}, \bibinfo{person}{Christian Cachin}, \bibinfo{person}{Konstantinos Christidis}, \bibinfo{person}{Angelo De~Caro}, \bibinfo{person}{David Enyeart}, \bibinfo{person}{Christopher Ferris}, \bibinfo{person}{Gennady Laventman}, \bibinfo{person}{Yacov Manevich}, {et~al\mbox{.}}} \bibinfo{year}{2018}\natexlab{}.
\newblock \showarticletitle{Hyperledger fabric: a distributed operating system for permissioned blockchains}. In \bibinfo{booktitle}{\emph{Proceedings of the thirteenth EuroSys conference}}. \bibinfo{pages}{1--15}.
\newblock


\bibitem[Aptos(2022)]%
        {aptos-whitepaper}
\bibfield{author}{\bibinfo{person}{Aptos}.} \bibinfo{year}{2022}\natexlab{}.
\newblock \bibinfo{title}{The Aptos Blockchain: Safe, Scalable, and Upgradeable Web3 Infrastructure.}
\newblock
\newblock
\urldef\tempurl%
\url{https://aptosfoundation.org/whitepaper/aptos-whitepaper_en.pdf}
\showURL{%
\tempurl}


\bibitem[Aptos(2023)]%
        {apt-blockgaslimit}
\bibfield{author}{\bibinfo{person}{Aptos}.} \bibinfo{year}{2023}\natexlab{}.
\newblock \bibinfo{title}{AIP-33 - Block Gas Limit}.
\newblock
\newblock
\urldef\tempurl%
\url{https://github.com/aptos-foundation/AIPs/blob/main/aips/aip-33.md}
\showURL{%
\tempurl}


\bibitem[Aptos(2024a)]%
        {latency-benchmark}
\bibfield{author}{\bibinfo{person}{Aptos}.} \bibinfo{year}{2024}\natexlab{a}.
\newblock \bibinfo{title}{E2E Latency Comparison Benchmark}.
\newblock
\newblock
\urldef\tempurl%
\url{https://github.com/aptos-labs/e2e-latency-bench}
\showURL{%
\tempurl}


\bibitem[Aptos(2024b)]%
        {apt-gas}
\bibfield{author}{\bibinfo{person}{Aptos}.} \bibinfo{year}{2024}\natexlab{b}.
\newblock \bibinfo{title}{Gas and Storage Fees}.
\newblock
\newblock
\urldef\tempurl%
\url{https://aptos.dev/en/network/blockchain/gas-txn-fee}
\showURL{%
\tempurl}


\bibitem[Aptos(2024c)]%
        {move}
\bibfield{author}{\bibinfo{person}{Aptos}.} \bibinfo{year}{2024}\natexlab{c}.
\newblock \bibinfo{title}{Move - A Web3 Language and Runtime}.
\newblock
\newblock
\urldef\tempurl%
\url{https://aptos.dev/en/network/blockchain/move}
\showURL{%
\tempurl}


\bibitem[Aptos(2024d)]%
        {aptos}
\bibfield{author}{\bibinfo{person}{Aptos}.} \bibinfo{year}{2024}\natexlab{d}.
\newblock \bibinfo{title}{Official implementation in Rust.}
\newblock
\newblock
\urldef\tempurl%
\url{https://github.com/aptos-labs/aptos-core}
\showURL{%
\tempurl}


\bibitem[Aptos(2024e)]%
        {aptos-pipeline}
\bibfield{author}{\bibinfo{person}{Aptos}.} \bibinfo{year}{2024}\natexlab{e}.
\newblock \bibinfo{title}{The pipeline implementation}.
\newblock
\newblock
\urldef\tempurl%
\url{https://github.com/aptos-labs/aptos-core/tree/main/consensus/src/pipeline}
\showURL{%
\tempurl}


\bibitem[Aptos(2024f)]%
        {latency-dashboard}
\bibfield{author}{\bibinfo{person}{Aptos}.} \bibinfo{year}{2024}\natexlab{f}.
\newblock \bibinfo{title}{Real-time E2E Latency Comparison Dashboard of Major Blockchains}.
\newblock
\newblock
\urldef\tempurl%
\url{https://aptoslabs.grafana.net/public-dashboards/f32a07a7ef01456cbb9f79ac975fb00e?orgId=1&refresh=15m}
\showURL{%
\tempurl}


\bibitem[Arun et~al\mbox{.}(2024)]%
        {arun2024shoal++}
\bibfield{author}{\bibinfo{person}{Balaji Arun}, \bibinfo{person}{Zekun Li}, \bibinfo{person}{Florian Suri-Payer}, \bibinfo{person}{Sourav Das}, {and} \bibinfo{person}{Alexander Spiegelman}.} \bibinfo{year}{2024}\natexlab{}.
\newblock \showarticletitle{Shoal++: High Throughput DAG BFT Can Be Fast!}
\newblock \bibinfo{journal}{\emph{arXiv preprint arXiv:2405.20488}} (\bibinfo{year}{2024}).
\newblock


\bibitem[Babel et~al\mbox{.}(2023)]%
        {babel2023mysticeti}
\bibfield{author}{\bibinfo{person}{Kushal Babel}, \bibinfo{person}{Andrey Chursin}, \bibinfo{person}{George Danezis}, \bibinfo{person}{Lefteris Kokoris-Kogias}, {and} \bibinfo{person}{Alberto Sonnino}.} \bibinfo{year}{2023}\natexlab{}.
\newblock \showarticletitle{Mysticeti: Low-Latency DAG Consensus with Fast Commit Path}.
\newblock \bibinfo{journal}{\emph{arXiv preprint arXiv:2310.14821}} (\bibinfo{year}{2023}).
\newblock


\bibitem[Blackshear et~al\mbox{.}(2024)]%
        {blackshear2024sui}
\bibfield{author}{\bibinfo{person}{Sam Blackshear}, \bibinfo{person}{Andrey Chursin}, \bibinfo{person}{George Danezis}, \bibinfo{person}{Anastasios Kichidis}, \bibinfo{person}{Lefteris Kokoris-Kogias}, \bibinfo{person}{Xun Li}, \bibinfo{person}{Mark Logan}, \bibinfo{person}{Ashok Menon}, \bibinfo{person}{Todd Nowacki}, \bibinfo{person}{Alberto Sonnino}, {et~al\mbox{.}}} \bibinfo{year}{2024}\natexlab{}.
\newblock \showarticletitle{Sui lutris: A blockchain combining broadcast and consensus}. In \bibinfo{booktitle}{\emph{Proceedings of the 2024 on ACM SIGSAC Conference on Computer and Communications Security}}. \bibinfo{pages}{2606--2620}.
\newblock


\bibitem[Boneh et~al\mbox{.}(2001)]%
        {boneh2001short}
\bibfield{author}{\bibinfo{person}{Dan Boneh}, \bibinfo{person}{Ben Lynn}, {and} \bibinfo{person}{Hovav Shacham}.} \bibinfo{year}{2001}\natexlab{}.
\newblock \showarticletitle{Short signatures from the Weil pairing}. In \bibinfo{booktitle}{\emph{Advances in Cryptology—ASIACRYPT 2001: 7th International Conference on the Theory and Application of Cryptology and Information Security Gold Coast, Australia, December 9--13, 2001 Proceedings 7}}. Springer, \bibinfo{pages}{514--532}.
\newblock


\bibitem[Buchman(2016)]%
        {buchman2016tendermint}
\bibfield{author}{\bibinfo{person}{Ethan Buchman}.} \bibinfo{year}{2016}\natexlab{}.
\newblock \emph{\bibinfo{title}{Tendermint: Byzantine fault tolerance in the age of blockchains}}.
\newblock \bibinfo{thesistype}{Ph.\,D. Dissertation}.
\newblock


\bibitem[Buterin and Griffith(2017)]%
        {buterin2017casper}
\bibfield{author}{\bibinfo{person}{Vitalik Buterin} {and} \bibinfo{person}{Virgil Griffith}.} \bibinfo{year}{2017}\natexlab{}.
\newblock \showarticletitle{Casper the friendly finality gadget}.
\newblock \bibinfo{journal}{\emph{arXiv preprint arXiv:1710.09437}} (\bibinfo{year}{2017}).
\newblock


\bibitem[Cachin et~al\mbox{.}(2001)]%
        {cachin2001secure}
\bibfield{author}{\bibinfo{person}{Christian Cachin}, \bibinfo{person}{Klaus Kursawe}, \bibinfo{person}{Frank Petzold}, {and} \bibinfo{person}{Victor Shoup}.} \bibinfo{year}{2001}\natexlab{}.
\newblock \showarticletitle{Secure and efficient asynchronous broadcast protocols}. In \bibinfo{booktitle}{\emph{Annual International Cryptology Conference}}. Springer, \bibinfo{pages}{524--541}.
\newblock


\bibitem[Cason et~al\mbox{.}(2021)]%
        {cason2021design}
\bibfield{author}{\bibinfo{person}{Daniel Cason}, \bibinfo{person}{Enrique Fynn}, \bibinfo{person}{Nenad Milosevic}, \bibinfo{person}{Zarko Milosevic}, \bibinfo{person}{Ethan Buchman}, {and} \bibinfo{person}{Fernando Pedone}.} \bibinfo{year}{2021}\natexlab{}.
\newblock \showarticletitle{The design, architecture and performance of the tendermint blockchain network}. In \bibinfo{booktitle}{\emph{2021 40th International Symposium on Reliable Distributed Systems (SRDS)}}. IEEE, \bibinfo{pages}{23--33}.
\newblock


\bibitem[Castro and Liskov(1999)]%
        {castro1999practical}
\bibfield{author}{\bibinfo{person}{Miguel Castro} {and} \bibinfo{person}{Barbara Liskov}.} \bibinfo{year}{1999}\natexlab{}.
\newblock \showarticletitle{Practical Byzantine fault tolerance}. In \bibinfo{booktitle}{\emph{Proceedings of the third Symposium on Operating Systems Design and Implementation (OSDI)}}. USENIX Association, \bibinfo{pages}{173--186}.
\newblock


\bibitem[Chan and Shi(2020)]%
        {chan2020streamlet}
\bibfield{author}{\bibinfo{person}{Benjamin~Y Chan} {and} \bibinfo{person}{Elaine Shi}.} \bibinfo{year}{2020}\natexlab{}.
\newblock \showarticletitle{Streamlet: Textbook Streamlined Blockchains}. In \bibinfo{booktitle}{\emph{Proceedings of the 1st ACM Conference on Advances in Financial Technologies (AFT)}}.
\newblock


\bibitem[Chen et~al\mbox{.}(2024)]%
        {chen2024porygon}
\bibfield{author}{\bibinfo{person}{Wuhui Chen}, \bibinfo{person}{Ding Xia}, \bibinfo{person}{Zhongteng Cai}, \bibinfo{person}{Hong-Ning Dai}, \bibinfo{person}{Jianting Zhang}, \bibinfo{person}{Zicong Hong}, \bibinfo{person}{Junyuan Liang}, {and} \bibinfo{person}{Zibin Zheng}.} \bibinfo{year}{2024}\natexlab{}.
\newblock \showarticletitle{Porygon: Scaling Blockchain via 3D Parallelism}. In \bibinfo{booktitle}{\emph{2024 IEEE 40th International Conference on Data Engineering (ICDE)}}. IEEE, \bibinfo{pages}{1944--1957}.
\newblock


\bibitem[Cohen et~al\mbox{.}(2022)]%
        {cohen2022aware}
\bibfield{author}{\bibinfo{person}{Shir Cohen}, \bibinfo{person}{Rati Gelashvili}, \bibinfo{person}{Lefteris~Kokoris Kogias}, \bibinfo{person}{Zekun Li}, \bibinfo{person}{Dahlia Malkhi}, \bibinfo{person}{Alberto Sonnino}, {and} \bibinfo{person}{Alexander Spiegelman}.} \bibinfo{year}{2022}\natexlab{}.
\newblock \showarticletitle{Be aware of your leaders}. In \bibinfo{booktitle}{\emph{International Conference on Financial Cryptography and Data Security}}. Springer, \bibinfo{pages}{279--295}.
\newblock


\bibitem[Crain et~al\mbox{.}(2021)]%
        {crain2021red}
\bibfield{author}{\bibinfo{person}{Tyler Crain}, \bibinfo{person}{Christopher Natoli}, {and} \bibinfo{person}{Vincent Gramoli}.} \bibinfo{year}{2021}\natexlab{}.
\newblock \showarticletitle{Red belly: A secure, fair and scalable open blockchain}. In \bibinfo{booktitle}{\emph{2021 IEEE Symposium on Security and Privacy (SP)}}. IEEE, \bibinfo{pages}{466--483}.
\newblock


\bibitem[Danezis et~al\mbox{.}(2022)]%
        {danezis2022narwhal}
\bibfield{author}{\bibinfo{person}{George Danezis}, \bibinfo{person}{Lefteris Kokoris-Kogias}, \bibinfo{person}{Alberto Sonnino}, {and} \bibinfo{person}{Alexander Spiegelman}.} \bibinfo{year}{2022}\natexlab{}.
\newblock \showarticletitle{Narwhal and Tusk: a DAG-based mempool and efficient BFT consensus}. In \bibinfo{booktitle}{\emph{Proceedings of the Seventeenth European Conference on Computer Systems}}. \bibinfo{pages}{34--50}.
\newblock


\bibitem[Diem(2020a)]%
        {diem-whitepaper}
\bibfield{author}{\bibinfo{person}{Diem}.} \bibinfo{year}{2020}\natexlab{a}.
\newblock \bibinfo{title}{The Diem Blockchain}.
\newblock
\newblock
\urldef\tempurl%
\url{https://developers.diem.com/docs/technical-papers/the-diem-blockchain-paper/}
\showURL{%
\tempurl}


\bibitem[Diem(2020b)]%
        {diem-pipeline}
\bibfield{author}{\bibinfo{person}{Diem}.} \bibinfo{year}{2020}\natexlab{b}.
\newblock \bibinfo{title}{The pipeline implementation}.
\newblock
\newblock
\urldef\tempurl%
\url{https://github.com/diem/diem/tree/main/consensus/src/experimental}
\showURL{%
\tempurl}


\bibitem[Diem(2024)]%
        {diem}
\bibfield{author}{\bibinfo{person}{Diem}.} \bibinfo{year}{2024}\natexlab{}.
\newblock \bibinfo{title}{Official implementation in Rust.}
\newblock
\newblock
\urldef\tempurl%
\url{https://github.com/diem/diem}
\showURL{%
\tempurl}


\bibitem[Doidge et~al\mbox{.}(2024)]%
        {doidge2024moonshot}
\bibfield{author}{\bibinfo{person}{Isaac Doidge}, \bibinfo{person}{Raghavendra Ramesh}, \bibinfo{person}{Nibesh Shrestha}, {and} \bibinfo{person}{Joshua Tobkin}.} \bibinfo{year}{2024}\natexlab{}.
\newblock \showarticletitle{Moonshot: Optimizing Chain-Based Rotating Leader BFT via Optimistic Proposals}.
\newblock \bibinfo{journal}{\emph{arXiv preprint arXiv:2401.01791}} (\bibinfo{year}{2024}).
\newblock


\bibitem[Dwork et~al\mbox{.}(1988)]%
        {dwork1988consensus}
\bibfield{author}{\bibinfo{person}{Cynthia Dwork}, \bibinfo{person}{Nancy Lynch}, {and} \bibinfo{person}{Larry Stockmeyer}.} \bibinfo{year}{1988}\natexlab{}.
\newblock \showarticletitle{Consensus in the presence of partial synchrony}.
\newblock \bibinfo{journal}{\emph{Journal of the ACM (JACM)}} \bibinfo{volume}{35}, \bibinfo{number}{2} (\bibinfo{year}{1988}), \bibinfo{pages}{288--323}.
\newblock


\bibitem[Ethereum(2024a)]%
        {eth-gas}
\bibfield{author}{\bibinfo{person}{Ethereum}.} \bibinfo{year}{2024}\natexlab{a}.
\newblock \bibinfo{title}{Gas and fees}.
\newblock
\newblock
\urldef\tempurl%
\url{https://ethereum.org/en/developers/docs/gas/}
\showURL{%
\tempurl}


\bibitem[Ethereum(2024b)]%
        {eth-blockgaslimit}
\bibfield{author}{\bibinfo{person}{Ethereum}.} \bibinfo{year}{2024}\natexlab{b}.
\newblock \bibinfo{title}{Gas and fees: block-size}.
\newblock
\newblock
\urldef\tempurl%
\url{https://ethereum.org/en/developers/docs/gas/#block-size}
\showURL{%
\tempurl}


\bibitem[Facebook(2013)]%
        {rocksdb}
\bibfield{author}{\bibinfo{person}{Facebook}.} \bibinfo{year}{2013}\natexlab{}.
\newblock \bibinfo{title}{{RocksDB}: A Persistent Key-Value Store for Fast Storage Environments}.
\newblock
\newblock
\urldef\tempurl%
\url{https://rocksdb.org}
\showURL{%
\tempurl}


\bibitem[Foundation(2020)]%
        {ethereumpos}
\bibfield{author}{\bibinfo{person}{Ethereum Foundation}.} \bibinfo{year}{2020}\natexlab{}.
\newblock \showarticletitle{PROOF-OF-STAKE (POS)}.
\newblock \bibinfo{howpublished}{\url{https://ethereum.org/en/developers/docs/consensus-mechanisms/pos/}}.
\newblock  (\bibinfo{year}{2020}).
\newblock


\bibitem[Gao et~al\mbox{.}(2021)]%
        {gao2021jellyfish}
\bibfield{author}{\bibinfo{person}{Zhenhuan Gao}, \bibinfo{person}{Yuxuan Hu}, {and} \bibinfo{person}{Qinfan Wu}.} \bibinfo{year}{2021}\natexlab{}.
\newblock \bibinfo{title}{Jellyfish Merkle Tree}.
\newblock
\newblock


\bibitem[Gelashvili et~al\mbox{.}(2022)]%
        {gelashvili2022jolteon}
\bibfield{author}{\bibinfo{person}{Rati Gelashvili}, \bibinfo{person}{Lefteris Kokoris-Kogias}, \bibinfo{person}{Alberto Sonnino}, \bibinfo{person}{Alexander Spiegelman}, {and} \bibinfo{person}{Zhuolun Xiang}.} \bibinfo{year}{2022}\natexlab{}.
\newblock \showarticletitle{Jolteon and ditto: Network-adaptive efficient consensus with asynchronous fallback}. In \bibinfo{booktitle}{\emph{International conference on financial cryptography and data security}}. Springer, \bibinfo{pages}{296--315}.
\newblock


\bibitem[Gelashvili et~al\mbox{.}(2023)]%
        {gelashvili2023block}
\bibfield{author}{\bibinfo{person}{Rati Gelashvili}, \bibinfo{person}{Alexander Spiegelman}, \bibinfo{person}{Zhuolun Xiang}, \bibinfo{person}{George Danezis}, \bibinfo{person}{Zekun Li}, \bibinfo{person}{Dahlia Malkhi}, \bibinfo{person}{Yu Xia}, {and} \bibinfo{person}{Runtian Zhou}.} \bibinfo{year}{2023}\natexlab{}.
\newblock \showarticletitle{Block-stm: Scaling blockchain execution by turning ordering curse to a performance blessing}. In \bibinfo{booktitle}{\emph{Proceedings of the 28th ACM SIGPLAN Annual Symposium on Principles and Practice of Parallel Programming}}. \bibinfo{pages}{232--244}.
\newblock


\bibitem[Gilad et~al\mbox{.}(2017)]%
        {gilad2017algorand}
\bibfield{author}{\bibinfo{person}{Yossi Gilad}, \bibinfo{person}{Rotem Hemo}, \bibinfo{person}{Silvio Micali}, \bibinfo{person}{Georgios Vlachos}, {and} \bibinfo{person}{Nickolai Zeldovich}.} \bibinfo{year}{2017}\natexlab{}.
\newblock \showarticletitle{Algorand: Scaling byzantine agreements for cryptocurrencies}. In \bibinfo{booktitle}{\emph{Proceedings of the 26th symposium on operating systems principles}}. \bibinfo{pages}{51--68}.
\newblock


\bibitem[Gueta et~al\mbox{.}(2019)]%
        {gueta2019sbft}
\bibfield{author}{\bibinfo{person}{Guy~Golan Gueta}, \bibinfo{person}{Ittai Abraham}, \bibinfo{person}{Shelly Grossman}, \bibinfo{person}{Dahlia Malkhi}, \bibinfo{person}{Benny Pinkas}, \bibinfo{person}{Michael Reiter}, \bibinfo{person}{Dragos-Adrian Seredinschi}, \bibinfo{person}{Orr Tamir}, {and} \bibinfo{person}{Alin Tomescu}.} \bibinfo{year}{2019}\natexlab{}.
\newblock \showarticletitle{SBFT: a scalable and decentralized trust infrastructure}. In \bibinfo{booktitle}{\emph{2019 49th Annual IEEE/IFIP International Conference on Dependable Systems and Networks (DSN)}}. IEEE, \bibinfo{pages}{568--580}.
\newblock


\bibitem[Keidar et~al\mbox{.}(2021)]%
        {keidar2021all}
\bibfield{author}{\bibinfo{person}{Idit Keidar}, \bibinfo{person}{Eleftherios Kokoris-Kogias}, \bibinfo{person}{Oded Naor}, {and} \bibinfo{person}{Alexander Spiegelman}.} \bibinfo{year}{2021}\natexlab{}.
\newblock \showarticletitle{All you need is dag}. In \bibinfo{booktitle}{\emph{Proceedings of the 2021 ACM Symposium on Principles of Distributed Computing}}. \bibinfo{pages}{165--175}.
\newblock


\bibitem[Keidar et~al\mbox{.}(2022)]%
        {keidar2022cordial}
\bibfield{author}{\bibinfo{person}{Idit Keidar}, \bibinfo{person}{Oded Naor}, \bibinfo{person}{Ouri Poupko}, {and} \bibinfo{person}{Ehud Shapiro}.} \bibinfo{year}{2022}\natexlab{}.
\newblock \showarticletitle{Cordial miners: Fast and efficient consensus for every eventuality}.
\newblock \bibinfo{journal}{\emph{arXiv preprint arXiv:2205.09174}} (\bibinfo{year}{2022}).
\newblock


\bibitem[Kuznetsov et~al\mbox{.}(2021)]%
        {kuznetsov2021revisiting}
\bibfield{author}{\bibinfo{person}{Petr Kuznetsov}, \bibinfo{person}{Andrei Tonkikh}, {and} \bibinfo{person}{Yan~X Zhang}.} \bibinfo{year}{2021}\natexlab{}.
\newblock \showarticletitle{Revisiting optimal resilience of fast byzantine consensus}. In \bibinfo{booktitle}{\emph{Proceedings of the 2021 ACM Symposium on Principles of Distributed Computing}}. \bibinfo{pages}{343--353}.
\newblock


\bibitem[Labs(2024a)]%
        {optqs}
\bibfield{author}{\bibinfo{person}{Aptos Labs}.} \bibinfo{year}{2024}\natexlab{a}.
\newblock \bibinfo{title}{AIP-106 - Optimistic Quorum Store.}
\newblock
\newblock
\urldef\tempurl%
\url{https://github.com/aptos-foundation/AIPs/blob/main/aips/aip-106.md}
\showURL{%
\tempurl}


\bibitem[Labs(2024b)]%
        {optjolteon}
\bibfield{author}{\bibinfo{person}{Aptos Labs}.} \bibinfo{year}{2024}\natexlab{b}.
\newblock \bibinfo{title}{AIP-89 Consensus Latency Reduction using Order Votes.}
\newblock
\newblock
\urldef\tempurl%
\url{https://github.com/aptos-foundation/AIPs/blob/main/aips/aip-89.md}
\showURL{%
\tempurl}


\bibitem[LAMPORT et~al\mbox{.}(1982)]%
        {lamport1982byzantine}
\bibfield{author}{\bibinfo{person}{LESLIE LAMPORT}, \bibinfo{person}{ROBERT SHOSTAK}, {and} \bibinfo{person}{MARSHALL PEASE}.} \bibinfo{year}{1982}\natexlab{}.
\newblock \showarticletitle{The Byzantine Generals Problem}.
\newblock \bibinfo{journal}{\emph{ACM Transactions on Programming Languages and Systems}} \bibinfo{volume}{4}, \bibinfo{number}{3} (\bibinfo{year}{1982}), \bibinfo{pages}{382--401}.
\newblock


\bibitem[Li(2021)]%
        {diem-decouple}
\bibfield{author}{\bibinfo{person}{Zekun Li}.} \bibinfo{year}{2021}\natexlab{}.
\newblock \bibinfo{title}{[dip-213] decoupled execution.}
\newblock
\newblock
\urldef\tempurl%
\url{https://github.com/diem/dip/pull/214}
\showURL{%
\tempurl}


\bibitem[Mazieres(2015)]%
        {mazieres2015stellar}
\bibfield{author}{\bibinfo{person}{David Mazieres}.} \bibinfo{year}{2015}\natexlab{}.
\newblock \showarticletitle{The stellar consensus protocol: A federated model for internet-level consensus}.
\newblock \bibinfo{journal}{\emph{Stellar Development Foundation}}  \bibinfo{volume}{32} (\bibinfo{year}{2015}), \bibinfo{pages}{1--45}.
\newblock


\bibitem[Merkle(1987)]%
        {merkle1987digital}
\bibfield{author}{\bibinfo{person}{Ralph~C Merkle}.} \bibinfo{year}{1987}\natexlab{}.
\newblock \showarticletitle{A digital signature based on a conventional encryption function}. In \bibinfo{booktitle}{\emph{Conference on the theory and application of cryptographic techniques}}. Springer, \bibinfo{pages}{369--378}.
\newblock


\bibitem[Nakamoto(2008)]%
        {nakamoto2008bitcoin}
\bibfield{author}{\bibinfo{person}{Satoshi Nakamoto}.} \bibinfo{year}{2008}\natexlab{}.
\newblock \showarticletitle{Bitcoin whitepaper}.
\newblock \bibinfo{journal}{\emph{URL: https://bitcoin. org/bitcoin. pdf-(: 17.07. 2019)}}  \bibinfo{volume}{9} (\bibinfo{year}{2008}), \bibinfo{pages}{15}.
\newblock


\bibitem[NEAR(2024)]%
        {near}
\bibfield{author}{\bibinfo{person}{NEAR}.} \bibinfo{year}{2024}\natexlab{}.
\newblock \bibinfo{title}{Consensus}.
\newblock
\newblock
\urldef\tempurl%
\url{https://nomicon.io/ChainSpec/Consensus}
\showURL{%
\tempurl}


\bibitem[O'Grady(2024)]%
        {ava-decouple}
\bibfield{author}{\bibinfo{person}{Patrick O'Grady}.} \bibinfo{year}{2024}\natexlab{}.
\newblock \bibinfo{title}{Vryx: Fortifying Decoupled State Machine Replication.}
\newblock
\newblock
\urldef\tempurl%
\url{https://hackmd.io/@patrickogrady/rys8mdl5p#Vryx-Fortifying-Decoupled-State-Machine-Replication}
\showURL{%
\tempurl}


\bibitem[Rust(2024a)]%
        {blstrs}
\bibfield{author}{\bibinfo{person}{Rust}.} \bibinfo{year}{2024}\natexlab{a}.
\newblock \bibinfo{title}{blstrs library}.
\newblock
\newblock
\urldef\tempurl%
\url{https://docs.rs/blstrs/latest/blstrs/}
\showURL{%
\tempurl}


\bibitem[Rust(2024b)]%
        {rust}
\bibfield{author}{\bibinfo{person}{Rust}.} \bibinfo{year}{2024}\natexlab{b}.
\newblock \bibinfo{title}{Rust Programming Language.}
\newblock
\newblock
\urldef\tempurl%
\url{https://www.rust-lang.org/}
\showURL{%
\tempurl}


\bibitem[Rust(2024c)]%
        {tokio}
\bibfield{author}{\bibinfo{person}{Rust}.} \bibinfo{year}{2024}\natexlab{c}.
\newblock \bibinfo{title}{tokio library}.
\newblock
\newblock
\urldef\tempurl%
\url{https://docs.rs/tokio/latest/tokio/}
\showURL{%
\tempurl}


\bibitem[Rust(2024d)]%
        {rust-future}
\bibfield{author}{\bibinfo{person}{Rust}.} \bibinfo{year}{2024}\natexlab{d}.
\newblock \bibinfo{title}{Trait Future}.
\newblock
\newblock
\urldef\tempurl%
\url{https://doc.rust-lang.org/std/future/trait.Future.html}
\showURL{%
\tempurl}


\bibitem[Solana(2024)]%
        {solana}
\bibfield{author}{\bibinfo{person}{Solana}.} \bibinfo{year}{2024}\natexlab{}.
\newblock \bibinfo{title}{Whitepaper.}
\newblock
\newblock
\urldef\tempurl%
\url{https://solana.com/solana-whitepaper.pdf}
\showURL{%
\tempurl}


\bibitem[Spiegelman et~al\mbox{.}(2023)]%
        {spiegelman2023shoal}
\bibfield{author}{\bibinfo{person}{Alexander Spiegelman}, \bibinfo{person}{Balaji Aurn}, \bibinfo{person}{Rati Gelashvili}, {and} \bibinfo{person}{Zekun Li}.} \bibinfo{year}{2023}\natexlab{}.
\newblock \showarticletitle{Shoal: Improving dag-bft latency and robustness}. In \bibinfo{booktitle}{\emph{International Conference on Financial Cryptography and Data Security}}. Springer.
\newblock


\bibitem[Spiegelman et~al\mbox{.}(2022)]%
        {spiegelman2022bullshark}
\bibfield{author}{\bibinfo{person}{Alexander Spiegelman}, \bibinfo{person}{Neil Giridharan}, \bibinfo{person}{Alberto Sonnino}, {and} \bibinfo{person}{Lefteris Kokoris-Kogias}.} \bibinfo{year}{2022}\natexlab{}.
\newblock \showarticletitle{Bullshark: Dag bft protocols made practical}. In \bibinfo{booktitle}{\emph{Proceedings of the 2022 ACM SIGSAC Conference on Computer and Communications Security}}. \bibinfo{pages}{2705--2718}.
\newblock


\bibitem[Stathakopoulou et~al\mbox{.}(2019)]%
        {stathakopoulou2019mir}
\bibfield{author}{\bibinfo{person}{Chrysoula Stathakopoulou}, \bibinfo{person}{Tudor David}, {and} \bibinfo{person}{Marko Vukolic}.} \bibinfo{year}{2019}\natexlab{}.
\newblock \showarticletitle{Mir-bft: High-throughput bft for blockchains}.
\newblock \bibinfo{journal}{\emph{arXiv preprint arXiv:1906.05552}}  \bibinfo{volume}{92} (\bibinfo{year}{2019}).
\newblock


\bibitem[Sui(2024)]%
        {sui}
\bibfield{author}{\bibinfo{person}{Sui}.} \bibinfo{year}{2024}\natexlab{}.
\newblock \bibinfo{title}{Official implementation in Rust.}
\newblock
\newblock
\urldef\tempurl%
\url{https://github.com/MystenLabs/sui}
\showURL{%
\tempurl}


\bibitem[XRP(2024)]%
        {xrp}
\bibfield{author}{\bibinfo{person}{XRP}.} \bibinfo{year}{2024}\natexlab{}.
\newblock \bibinfo{title}{Consensus Protocol}.
\newblock
\newblock
\urldef\tempurl%
\url{https://xrpl.org/docs/concepts/consensus-protocol}
\showURL{%
\tempurl}


\bibitem[Yin et~al\mbox{.}(2019)]%
        {yin2019hotstuff}
\bibfield{author}{\bibinfo{person}{Maofan Yin}, \bibinfo{person}{Dahlia Malkhi}, \bibinfo{person}{Michael~K Reiter}, \bibinfo{person}{Guy~Golan Gueta}, {and} \bibinfo{person}{Ittai Abraham}.} \bibinfo{year}{2019}\natexlab{}.
\newblock \showarticletitle{Hotstuff: Bft consensus with linearity and responsiveness}. In \bibinfo{booktitle}{\emph{Proceedings of the 2019 ACM Symposium on Principles of Distributed Computing}}. ACM, \bibinfo{pages}{347--356}.
\newblock


\bibitem[Zargar et~al\mbox{.}(2013)]%
        {zargar2013survey}
\bibfield{author}{\bibinfo{person}{Saman~Taghavi Zargar}, \bibinfo{person}{James Joshi}, {and} \bibinfo{person}{David Tipper}.} \bibinfo{year}{2013}\natexlab{}.
\newblock \showarticletitle{A survey of defense mechanisms against distributed denial of service (DDoS) flooding attacks}.
\newblock \bibinfo{journal}{\emph{IEEE communications surveys \& tutorials}} \bibinfo{volume}{15}, \bibinfo{number}{4} (\bibinfo{year}{2013}), \bibinfo{pages}{2046--2069}.
\newblock


\end{thebibliography}

\clearpage
\appendix


\section{Existing Blockchain Pipeline Architectures}
\label{app:pipeline}

We provide illustrations of existing blockchain pipeline architecture, including coupled-consensus-execution in~\Cref{fig:coupled}, execution-then-consensus in~\Cref{fig:execution-then-consensus}, and consensus-then-execution in~\Cref{fig:consensus-then-execution} and~\Cref{fig:porygon}.
More details of can be found in~\Cref{sec:rw:blockchain}.

\begin{figure}[t!]
    \centering
    \includegraphics[width=0.8\linewidth]{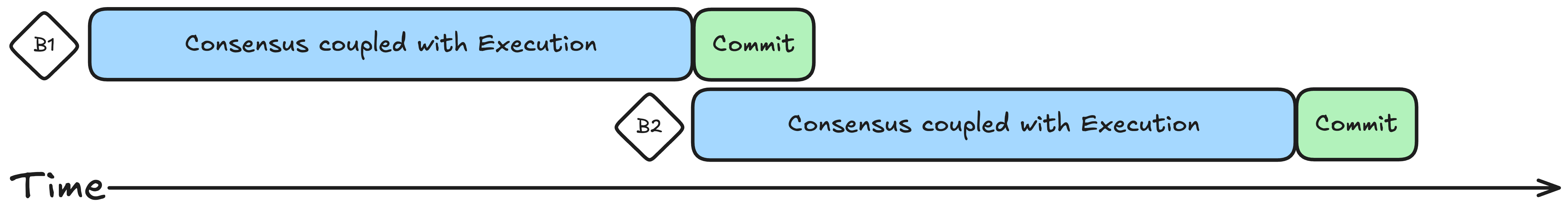}
    \caption{Illustration of the coupled-consensus-execution pipeline architecture. The consensus stage is tightly integrated with the execution of blocks which determines the new blockchain state during consensus.}
    \label{fig:coupled}
\end{figure}

\begin{figure}[t!]
    \centering
    \includegraphics[width=\linewidth]{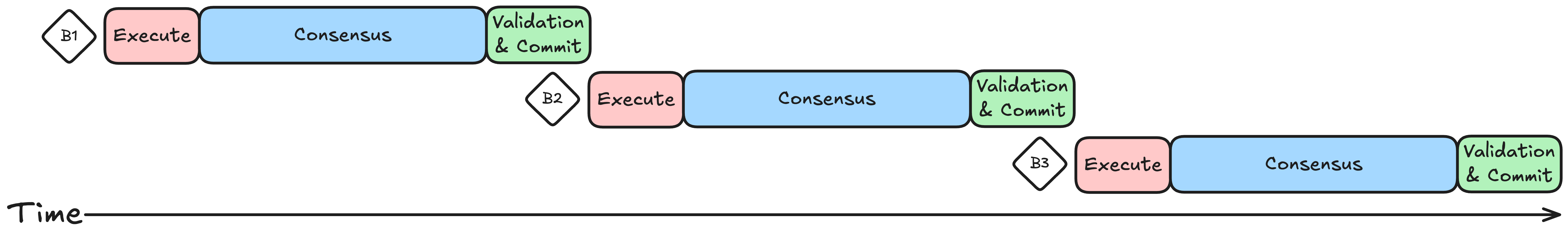}
    \caption{Illustration of the execution-then-consensus pipeline architecture. Validators first execute a list of transactions locally, producing execution outputs. These outputs are then subjected to a consensus process to agree on their ordering and, consequently, the new blockchain state.}
    \label{fig:execution-then-consensus}
\end{figure}

\begin{figure}[t!]
    \centering
    \includegraphics[width=0.8\linewidth]{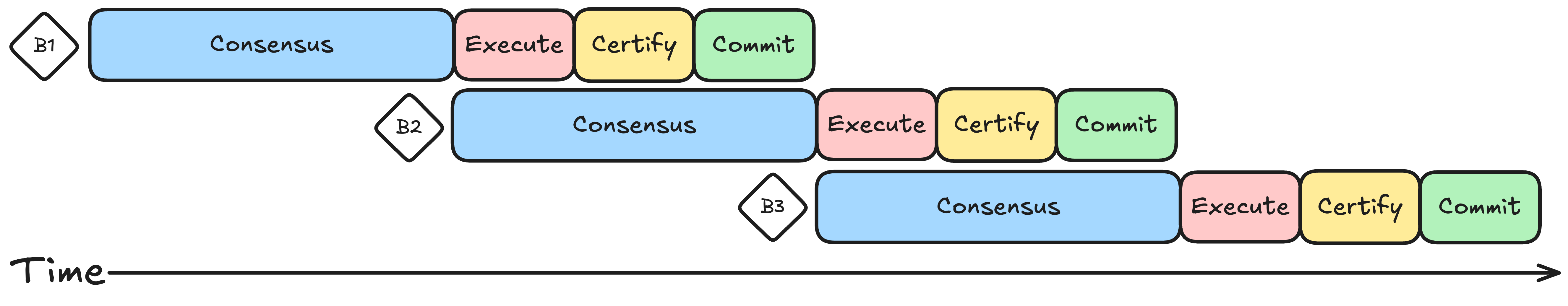}
    \caption{Illustration of the consensus-then-execution pipeline architecture. 
    Validators initially reach a consensus on a new block extending the current blockchain. Execution of the ordered block follows, producing the updated blockchain state. To produce publicly verifiable proof of the new blockchain state and avoid safety violation caused by non-deterministic execution, a certification stage is introduced prior to commit. When integrated with a pipelined consensus protocol, this architecture effectively becomes the pipelined architecture of \baseline (\Cref{fig:baseline}). }
    \label{fig:consensus-then-execution}
\end{figure}

\begin{figure}[t!]
    \centering
    \includegraphics[width=\linewidth]{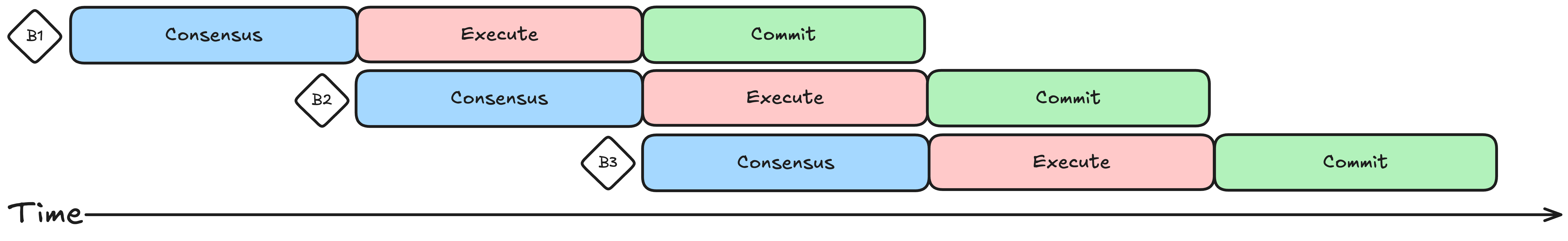}
    \caption{Illustration of the Porygon~\cite{chen2024porygon} pipeline architecture. Porygon falls under the consensus-then-execution pipeline architecture, but relies on subsequent consensus stage for certification. 
    As a result, a block $\blk_1$ is committed only after the consensus instance for $\blk_3$, which includes the piggybacked updated blockchain state following its execution, is completed.}
    \label{fig:porygon}
\end{figure}

\end{document}